\newcommand{\logicClFont}[1]{\mathsf{#1}}        
\newcommand{\numberClassFont}[1]{\mathbb{#1}}     
\newcommand{\complexityClassFont}[1]{\textsl{#1}} 
\newcommand{\mathCommandFont}[1]{\mathrm{#1}}     
\newcommand{\problemFont}[1]{\textsc{#1}}     
\newcommand{\literal}[1]{{\protect\ensuremath{#1}}\xspace}
\newcommand{\problem}[1]{\literal{\problemFont{#1}}}
\newcommand{\logic}[1]{\literal{\logicClFont{#1}}}
\newcommand{\commandOperator}[2]{\literal{\mathord{\mathCommandFont{#1}\ifthenelse{\equal{#2}{}}{}{(\nobreak#2\nobreak)}}}}
\newcommand{\cmd}[1]{\commandOperator{#1}{}}
\newcommand{\todo}[2][]{\textnormal{\color{red}\scriptsize+++\ifthenelse{\equal{#1}{}}{}{#1 says: }#2+++}}
\newcommand{\N}{\protect\ensuremath{\numberClassFont{N}}}
\newcommand{\NEXPTIME}{\literal{\complexityClassFont{NEXPTIME}}}
\newcommand{\DNEXPTIME}{\literal{\complexityClassFont{2NEXPTIME}}}
\newcommand{\sigmazeroone}{\literal{\mathit{\Sigma}^0_1}}
\newcommand{\sigmaoneone}{\literal{\mathit{\Sigma}^1_1}}
\newcommand{\pizeroone}{\literal{\mathit{\Pi}^0_1}}
\newcommand{\true}{\protect\raisebox{-1pt}{$\true$}}
\newcommand{\false}{\protect\ensuremath\raisebox{-1pt}{$\false$}}
\newcommand{\dep}[1][]{\commandOperator{=}{#1}}
\newcommand{\fr}[1][]{\commandOperator{Fr}{#1}}
\newcommand{\ar}[1][]{\commandOperator{arity}{#1}}
\newcommand{\var}[1][]{\commandOperator{Var}{#1}}
\newcommand{\dom}[1][]{\commandOperator{dom}{#1}}
\newcommand{\rng}[1][]{\commandOperator{range}{#1}}
\newcommand{\rel}[1][]{\commandOperator{rel}{#1}}
\newcommand{\df}{\logic{D}}
\newcommand{\ifl}{\logic{IF}}
\newcommand{\dtwo}{\literal{\df^2}}
\newcommand{\iftwo}{\literal{\ifl^2}}
\newcommand{\eso}{\logic{ESO}}
\newcommand{\fo}{\logic{FO}}
\newcommand{\fotwo}{\literal{\fo^2}}
\newcommand{\foc}{\logic{FOC}}
\newcommand{\foctwo}{\logic{FOC^2}}
\newcommand{\mA}{\literal{\mathfrak{A}}}
\newcommand{\mB}{\literal{\mathfrak{B}}}
\newcommand{\mG}{\literal{\mathfrak{G}}}
\newcommand{\mH}{\literal{\mathfrak{H}}}
\newcommand{\mD}{\literal{\mathfrak{D}}}
\newcommand{\mM}{\literal{\mathfrak{M}}}
\renewcommand{\true}{\literal{\top}}
\renewcommand{\false}{\literal{\bot}}
\newcommand{\sat}[1][]{\commandOperator{\problem{Sat}}{#1}}
\newcommand{\finsat}[1][]{\commandOperator{\problem{FinSat}}{#1}}
\newcommand{\calL}{\literal{\mathcal{L}}}
\newcommand{\tiling}[1][]{\commandOperator{\problem{Tiling}}{#1}}
\newcommand{\torus}{\literal{\mathrm{Torus}}}
\newcommand{\I}{\literal{\mathrm{I}}}
\theoremstyle{definition}
\newtheorem{definition}{Definition}[section]
\theoremstyle{plain}
\newtheorem{theorem}[definition]{Theorem}
\newtheorem{lemma}[definition]{Lemma}
\newtheorem{proposition}[definition]{Proposition}
\newtheorem{remark}[definition]{Remark}
\newcommand{\Title}{Complexity of two-variable Dependence Logic and IF-Logic}
\newcommand{\Author}{
Juha Kontinen%
\thanks{University of Helsinki, juha.kontinen@helsinki.fi}
, Antti Kuusisto%
\thanks{University of Tampere, \{antti.j.kuusisto, jonni.virtema\}@uta.fi}
, Peter Lohmann%
\thanks{Leibniz University Hannover, lohmann@thi.uni-hannover.de}
, Jonni Virtema%
\footnotemark[3]
}
\newcommand{\PDFAuthor}{Juha Kontinen, Antti Kuusisto, Peter Lohmann, Jonni Virtema}
\newcommand{\Keywords}{dependence logic, independence-friendly logic, two-variable logic, decidability, complexity, satisfiability, expressivity}
\newcommand{\SubjClass}{F.4.1 Computability theory, Model theory; F.1.3 Reducibility and completeness}
\begin{document}

\title{\Title\thanks{This work was supported by grants 127661, 129208, 129761, 129892 and 138163 of the Academy of Finland and by DAAD grant 50740539}}
\author{\Author}

\maketitle

\begin{abstract}
\small
We study the two-variable fragments $\dtwo$ and $\iftwo$ of dependence logic and in\-de\-pen\-dence-friendly logic. We consider the satisfiability and finite satisfiability  problems of these logics and show that for $\dtwo$, both problems are \NEXPTIME-complete, whereas for $\iftwo$, the problems are \pizeroone and \sigmazeroone-complete, respectively. We also show that $\dtwo$ is strictly less expressive than $\iftwo$ and that already in $\dtwo$, equicardinality of two unary predicates and infinity can be expressed (the latter in the presence of a constant symbol).

This is an extended version of a publication in the proceedings of the 26th Annual IEEE Symposium on Logic in Computer Science (LICS 2011).
\end{abstract}

\smallskip
{\bfseries Keywords:}
\Keywords

\smallskip
{\bfseries ACM Subject Classifiers:}
\SubjClass

\thispagestyle{fancy}
\renewcommand{\headrulewidth}{0pt}
\renewcommand{\footrulewidth}{0pt}
\fancyhf{}
\fancyfoot[L]{\scriptsize \textcopyright\,2011 IEEE (for the original version of this article).\\
Personal use of this material is permitted. Permission from IEEE must be obtained for all other users, including reprinting/ republishing this material for advertising or promotional purposes, creating new collective works for resale or redistribution to servers or lists, or reuse of any copyrighted components of this work in other works.\\
The original publication is (or will soon be) available at ieeexplore.ieee.org}

\section{Introduction}

The satisfiability problem of first-order logic \fo was shown to be undecidable in \cite{ch36,tu36}, and ever since, logicians have been searching for decidable fragments of \fo.
Henkin \cite{he67} was the first to consider the logics $\fo^k$, i.e., the fragments of first-order logic with $k$ variables.
The fragments $\fo^k$, for $k\ge 3$, were easily seen to be undecidable but the case for $k=2$ remained open.
Scott \cite{sc62} then showed that \fotwo without equality is decidable.  Mortimer \cite{mo75} extended the result to \fotwo with equality and showed that every satisfiable \fotwo formula has a model whose size is doubly exponential in the length of the formula. His result established that the satisfiability and finite satisfiability problems of \fotwo are contained in \DNEXPTIME.
Finally, Gr\"adel, Kolaitis and Vardi \cite{grkova97} improved the result of Mortimer by establishing that every satisfiable \fotwo formula  has a model of  exponential size. Furthermore, they showed that the satisfiability problem for \fotwo is \NEXPTIME-complete.

The decidability of the satisfiability problem of various extensions of \fotwo has been studied (e.g.~\cite{grotro97, grot99, etvawi02, kiot05}). One such interesting extension \foctwo is acquired by extending \fotwo with counting quantifiers $\exists^{\geq i}$. The meaning of a formula of the form $\exists^{\geq i} x \phi(x)$ is that $\phi(x)$ is satisfied by  at least $i$ distinct elements.
The satisfiability problem for the logic \foctwo
was shown to be decidable by Gr\"adel et al.~\cite{grotro97a}, and shown to be in \DNEXPTIME by Pacholski et al.~\cite{paszte97}. 
Finally, Pratt-Hartmann \cite{Pratt-Hartmann:2005} established that the problem is \NEXPTIME-complete.
We will later use the result of Pratt-Hartmann to determine the complexity of the satisfiability problem of the two-variable fragment of dependence logic.

\begin{table}[!t]\label{table:results}
\begin{center}
\begin{tabular}{ccc}
\emph{Logic}            & \emph{Complexity of \sat~/ \finsat}   & \multicolumn{1}{c}{\emph{References}}\\
\fo, $\fo^3$            & \pizeroone ~/ \sigmazeroone           & \cite{ch36,tu36}\\
\eso, \df, \ifl         & \pizeroone ~/ \sigmazeroone           & Remark~\ref{eso sat}, \cite{ch36,tu36}\\
\fotwo                  & \NEXPTIME                             & \cite{grkova97}\\
\foctwo                 & \NEXPTIME                             & \cite{Pratt-Hartmann:2005}\\
\dtwo                   & \NEXPTIME                             & Theorem~\ref{dtwo nexptime}\\
$\fotwo(\I)$        & \sigmaoneone-hard / \sigmazeroone     & \cite{grotro97}\\
\iftwo                  & \pizeroone ~/ \sigmazeroone           & Theorems~\ref{iftwo sat complexity},~\ref{iftwo finsat complexity}\\
\end{tabular}
\caption{Complexity of satisfiability for various logics.\newline
The results are completeness results for the full relational vocabulary.}
\end{center}
\end{table}

In this article we study the satisfiability of the two-variable fragments of independence-friendly logic (\ifl) and dependence logic (\df). The logics $\ifl$ and $\df$ are conservative extensions of $\fo$, i.e., they agree with $\fo$ on sentences which syntactically are $\fo$-sentences. We thereby contribute to the understanding of the satisfiability problems of extensions of $\fotwo$. We briefly recall the history of $\ifl$ and $\df$. In first-order logic the order in which quantifiers are written determines dependence relations between variables. For example, when using game 
theoretic semantics to evaluate the formula
\[ 
\forall x_0\exists x_1\forall x_2\exists x_3\, \phi,
\] 
 the choice for $x_1$ depends on the value for $x_0$, and the choice for $x_3$   depends on the value of both universally quantified variables $x_0$  and  $x_2$. The characteristic feature of \df and \ifl is that in these logics it is possible to express dependencies between variables  that cannot be expressed in \fo.
The first step in this direction was taken by Henkin \cite{henkin1961} with his partially ordered quantifiers
\begin{equation}\label{poc}
\left(\begin{array}{cc}\forall x_0& \exists x_1\\ \forall
x_2&\exists x_3\end{array}\right)\phi,
\end{equation}
where $x_1$ depends only
on $x_0$ and  $x_3$ depends only on $x_2$. Enderton \cite{MR44:1546} and Walkoe \cite{MR43:4646} observed that exactly the properties definable in existential second-order logic (\eso) can be expressed
with partially ordered quantifiers.
The second step was taken by Hintikka and Sandu \cite{MR1034575,MR1410063}, who introduced independence-friendly logic, which extends \fo in terms of so-called slashed quantifiers.
For example, in 
\[\forall x_0\exists x_1\forall x_2\exists
x_3/\forall x_0\phi,\]
the quantifier $\exists x_3/\forall x_0$ means that
$x_3$ is ``independent'' of $x_0$ in the sense that a choice for
the value of $x_3$ should not depend on what the value of $x_0$
is. The semantics of \ifl was first formulated in game theoretic terms, and
\ifl can be regarded as a game theoretically motivated generalization of \fo.
Whereas the semantic game for \fo is a game of perfect information, the game for \ifl is a game of imperfect information.
The so-called team semantics of \ifl, also used in this paper, was introduced by Hodges  \cite{MR1465612}.     

Dependence logic, introduced by V\"a\"an\"anen \cite{va07},
was inspired by \ifl-logic, but the approach of V\"a\"an\"anen provided a fresh perspective on quantifier dependence. In dependence logic the 
dependence relations between variables are written in terms of novel 
atomic dependence formulas. For example, the partially ordered quantifier \eqref{poc} can be expressed in dependence logic as follows
\[ 
\forall x_0\exists x_1\forall x_2\exists
x_3(\dep(x_2,x_3)\wedge\phi).
\] 
The  atomic formula $\dep(x_2,x_3)$ has
the explicit meaning that $x_3$ is completely determined by $x_2$ and nothing else. 

In recent years, research related to \ifl and \df has been active. A variety of closely related logics have been defined and various applications suggested, see e.g.~\cite{Abramsky:2007, Bradfield:2005, Gradel+aananen:2010, lovo10, Sevenster:2009, Vaananen+Hodges:2010}.
While both \ifl and \df are known to be equi-expressive to \eso, the relative strengths and weaknesses of the two different logics in relation to applications is not understood well. In this article we take a step towards a better understanding of this matter.
After recalling some basic properties in Section \ref{preliminaries}, we compare the expressivity of the finite variable fragments of \df and \ifl in Section \ref{comparison}. We show that there is an effective translation from \dtwo to \iftwo (Theorem~\ref{dtoif}) and from \iftwo to $\df^3$ (Theorem~\ref{iftod}). 
We also show that \iftwo is strictly more expressive than \dtwo (Proposition~\ref{d less than if}). 
This result is a by-product of our proof in Section \ref{ifsatsection} that the satisfiability problem of \iftwo is undecidable (Theorem~\ref{iftwo sat complexity} 
shows \pizeroone-completeness). The proof can be adapted to the context of finite satisfiability, i.e., the problem of determining for a given formula $\phi$ whether there is a finite structure $\mA$ 
such that $\mA\models \phi$ (Theorem~\ref{iftwo finsat complexity} shows \sigmazeroone-completeness). The undecidability proofs are based on tiling arguments.
Finally, in Section \ref{sec:satd2}, we study the decidability of the satisfiability and finite satisfiability problems of \dtwo. 
For this purpose we reduce the problems to the (finite) satisfiability problem for \foctwo (Theorem~\ref{DtoESO}) and thereby show 
that they are \NEXPTIME-complete (Theorem~\ref{dtwo nexptime}). Table~\ref{table:results} gives an overview of previously-known as well as new complexity results.

\section{Preliminaries}\label{preliminaries}
In this section we recall the basic concepts and results relevant for this article.
%

The domain of a structure $\mA$ is denoted by $A$. We assume that the reader is familiar with first-order logic \fo. The extension of \fo in terms of counting quantifiers $\exists ^{\ge i}$ is denoted by \foc. We also consider the extension $\fo(\I)$ of \fo by the H\"artig quantifier $\I$. The interpretation of the quantifier \I is defined by the clause
\[\mA,s\models \I\, xy(\phi(x),\psi(
y))\Leftrightarrow |\phi(x)^{\mA,s}|=|\psi(y)^{\mA,s}|,  \]
where $\phi(x)^{\mA,s}:=\{a\in A\ |\ \mA,s\models \phi(a)\}$.
The $k$-variable fragments $\fo^k$, $\foc^k$, and $\fo^k(\I)$ are the fragments of \fo, \foc, and $\fo(\I)$ with formulas in which at most $k$, say $x_1,\ldots,x_k$, distinct variables appear. In the case $k=2$, we denote these variables by $x$ and $y$. 
The existential fragment of second-order logic is denoted by \eso.
 For logics $\calL$ and $\mathcal L'$, we write $\mathcal L \leq \mathcal L'$ if for  every sentence $\phi$ of $\mathcal L$ there is a sentence $\phi^*$ of $\mathcal L'$ such that for all structures $\mA$ it holds that $\mA\models \phi$ iff $\mA\models \phi^*$. We write $\calL \mathcal \equiv \mathcal L'$ if $\mathcal L \leq \mathcal L'$ and $\mathcal L' \leq \mathcal L$. 

We assume that the reader is familiar with the basics of computational complexity theory. In this article we are interested in the complexity of the satisfiability problems of various logics. For any logic \calL the \emph{satisfiability problem} $\sat[\calL]$ is defined as
\[\sat[\calL] := \{\phi \in \calL\mid \text{there is a structure $\mA$ such that $\mA\models \phi$}\}.\]

The finite satisfiability problem $\finsat[\calL]$ is the analogue of $\sat[\calL]$ in which we require the structure $\mA$ to be finite. The following observation will be useful later.

\begin{remark}\label{eso sat}
If $\phi$ is a formula over the vocabulary $\tau$ and
\[\psi := \exists R_1 \dots \exists R_n \exists f_1 \dots \exists f_m \phi\]
with $R_1,\dots,R_n,f_1,\dots,f_m\in \tau$, then $\phi$ is satisfiable iff the second-order formula $\psi$ is satisfiable.
\end{remark}

\subsection{\texorpdfstring{The logics \df and  \ifl}{The logics D and IF}}
In this section we define independence-friendly logic and dependence logic and recall some related basic results. For $\ifl$ we follow the exposition of \cite{CaicedoDJ09} and the forthcoming monograph \cite{ASS}.

\begin{definition}\label{def:iftwo intuitive}
The syntax of $\ifl$ extends the syntax of $\fo$ defined in terms
of $\vee$, $\wedge$, $\neg$, $\exists$ and $\forall$,
by adding quantifiers of the form
\begin{align*}
&\exists x/W\phi\\
&\forall x/W\phi
\end{align*}
called slashed quantifiers, where $x$ is a first-order variable, $W$ a finite set of first-order variables and $\phi$ a formula. 
\end{definition}

\begin{definition}[\cite{va07}]\label{def:dtwo intuitive}
The syntax of $\df$ extends the syntax of $\fo$, defined in terms
of $\vee$, $\wedge$, $\neg$, $\exists$ and $\forall$, by new
atomic (dependence) formulas of the form
\begin{equation*}
\dep[t_1,\ldots,t_n],
\end{equation*} where $t_1,\ldots,t_n$ are terms. 
\end{definition}

The set $\fr[\phi]$ of free variables of a formula 
$\phi\in \df \cup \ifl$ is defined  as for first-order logic except that we have the new cases
\[\begin{array}{lcl}
\fr(\dep(t_1,\ldots,t_n))=\var(t_1)\cup\cdots \cup \var(t_n)\\
\fr(\exists x /W\psi)=W\cup (\fr(\psi)\setminus \{x\})\\
\fr(\forall x /W\psi)=W\cup (\fr(\psi)\setminus \{x\})\\
\end{array}\]
where $\var(t_i)$ is the set of variables occurring in the term $t_i$. If $\fr(\phi)=\emptyset$, 
we call $\phi$ a sentence. 

\begin{definition}\label{def:diftwo}
Let $\tau$ be a relational vocabulary, i.e.,~$\tau$ does not contain function or constant symbols.
\begin{enumerate}[a)]
 \item 
The \emph{two-variable independence-friendly} logic $\iftwo(\tau)$ is generated from $\tau$ according to the following grammar:
\begin{align*}
\phi::= &t_1=t_2\mid R(t_1,\dots,t_n) \mid \neg t_1=t_2\mid \neg R(t_1,\dots,t_n)  \mid\\
&(\phi\wedge\phi) \mid (\phi\vee\phi) \mid \forall x \phi \mid \forall y \phi\mid \exists x/W\phi\mid \exists y/W\phi
\end{align*}
\item The \emph{two-variable dependence} logic $\dtwo(\tau)$ is generated from $\tau$ according to the following grammar:
\begin{align*}
\phi::= &t_1=t_2\mid R(t_1,\dots,t_n) \mid \neg t_1=t_2\mid \neg R(t_1,\dots,t_n)  \mid  \dep[t_1,t_2] \mid \neg \dep[t_1,t_2]\mid\\
& \dep[t_1] \mid \neg \dep[t_1]\mid(\phi\wedge\phi) \mid (\phi\vee\phi) \mid \forall x \phi \mid \forall y \phi\mid \exists x\phi\mid \exists y\phi
\end{align*}
\end{enumerate}
Here $R\in\tau$ is an $n$-ary relation symbol, $W\subseteq\{x,y\}$ and $t_1,\dots,t_n \in \{x,y\}$. We identify existential first-order quantifiers with existential quantifiers with empty slash sets, and therefore if $W=\emptyset$ we simply write $\exists x\phi(x)$ instead of $\exists x /W \phi(x)$.
When $\tau$ is clear we often leave it out. To simplify notation, we assume in the following that the relation symbols $R\in\tau$ are at most binary. 
\end{definition}

Note that in Definition~\ref{def:diftwo} we have only defined formulas in negation normal form and for that reason we do not need the slashed universal quantifier in \iftwo \cite{MR1465612}. Defining syntax in negation normal form is customary in $\ifl$ and \df. A formula $\phi$ with arbitrary negations is considered an abbreviation of the negation normal form formula $\psi$ obtained from $\phi$ by pushing the negations to the atomic level in the same fashion as in first-order logic.
It is important to note that the game theoretically motivated negation $\neg$ of \df and \ifl does not satisfy the law of excluded middle and is therefore not the classical Boolean negation. This is manifested by the existence of sentences  $\phi$ such that for some  $\mA$ we have $\mA\not\models\phi$ and $\mA\not\models\neg\phi$.

In order to define the semantics of $\ifl$ and $\df$, we first need to
define the concept of a \emph{team}. Let $\mA$ be a model with the domain $A$. {\em Assignments} over $\mA$
are finite functions that map variables to elements of $A$. The value of a term $t$
in an assignment $s$ is denoted by $t^{\mA}\langle s\rangle$.
If $s$ is an assignment, $x$ a variable,  and $a\in A$, then $s(a/x)$ denotes the 
assignment (with the domain $\dom(s)\cup \{x\}$)  which agrees with $s$
everywhere except that it maps $x$ to $a$.

Let $A$ be a set and $\{x_1,\ldots,x_k\}$ a finite (possibly empty) set  of
variables. 
 A {\em team} $X$ of $A$ with the domain
$\dom(X)=\{x_1,\ldots,x_k\}$ is any set of assignments from the variables
$\{x_1,\ldots,x_k\}$ into the set $A$. We denote by $\rel(X)$ the
$k$-ary relation of $A$ corresponding to $X$ 
\[\rel(X)=\{(s(x_1),\ldots,s(x_k)) \mid s\in X \}.  \]
 If $X$ is a team of $A$,
and $F\colon X\rightarrow A$, we use $X(F/x)$ to denote the team
$\{s(F(s)/x) \mid s\in X \}$ and $X(A/x)$ the team $\{s (a/x)
\mid s\in X\ \textrm{and}\ a\in A \}$.
For a set $W\subseteq \dom(X)$ we call $F$ \emph{$W$-independent} if for all $s,s' \in X$ with $s(x)=s'(x)$ for all $x\in \dom(X)\setminus W$ we have that $F(s)=F(s')$.

We are now ready to define the semantics of \ifl and \df.

\begin{definition}[\cite{MR1465612,va07}]\label{def:semantics}
Let $\mA$ be a model and $X$ a team of $A$. The satisfaction relation
$\mA\models _X \phi$ is defined as follows:
\begin{enumerate}
\item If $\phi$ is a first-order literal, then $\mA\models_X \phi$ iff for all $s\in X$: $\mA,s\models_{\fo}\phi$.

\item $\mA\models_X \psi \wedge \phi$ iff $\mA\models _X \psi$ and $\mA\models _X \phi$.
\item $\mA\models_X \psi \vee \phi$ iff there exist teams $Y$ and $Z$ such that $X=Y\cup Z$, $\mA\models_Y \psi$ and $\mA\models _Z \phi$.

\item   $\mA \models_X \exists x \psi$ iff $\mA \models _{X(F/x)} \psi$ for some $F\colon X\to A$.

\item $\mA \models_X \forall x\psi$ iff $\mA \models _{X(A/x)} \psi$.
\end{enumerate}
For \ifl we further have the following rules:
\begin{enumerate}
\setcounter{enumi}{5}
\item $\mA\models_X \exists x/W\phi$ iff $\mA\models_{X(F/x)}\phi$ for some $W$-independent function $F:X\rightarrow A$.

\item $\mA\models_X \forall x/W\phi$ iff $\mA\models_{X(A/x)}\phi$.

\end{enumerate}
And for \df we have the additional rules:
\begin{enumerate}
\setcounter{enumi}{7}
\item $\mA\models_X \dep(t_{1},\ldots,t_{n})$ iff for all $s,s'\in
X$ such that $t_1^{\mA}\langle s\rangle  =t_1^{\mA}\langle
s'\rangle  ,\ldots, t_{n-1}^{\mA}\langle s\rangle
=t_{n-1}^{\mA}\langle s'\rangle  $, we have $t_n^{\mA}\langle
s\rangle  =t_n^{\mA}\langle s'\rangle  $.

\item  $\mA \models_X \neg \dep(t_{1},\ldots,t_{n})$ iff
$X=\emptyset$.
\end{enumerate}
Above, we assume that the domain of $X$ contains $\fr[\phi]$. Finally, a sentence $\phi$ is true in a model $\mA$  ($\mA\models \phi$)  if $\mA\models _{\{\emptyset\}} \phi$.
\end{definition}

From Definition \ref{def:semantics} it follows that many familiar propositional equivalences of connectives do not hold in \df and \ifl. For example, the idempotence of disjunction fails, which can be used to show that the  distributivity 
laws of disjunction and conjunction do not hold either. We refer
to \cite[Section~3.3]{va07} for a detailed exposition on propositional equivalences of connectives in \df (and also \ifl). Another feature of 
Definition \ref{def:semantics} is that $\mA\models _{\emptyset}\phi$ for all $\mA$ and all formulas $\phi$ of \df and \ifl. This observation is important in noting that, for sentences $\phi$ and $\psi$, the interpretation  of $\phi\vee \psi$ coincides with the classical disjunction of $\phi$ and $\psi$.


\subsection{\texorpdfstring{Basic properties of \df and \ifl}{Basic properties of D and IF}}
In this section we recall some basic properties of \df and \ifl.

Let $X$ be a team with the domain $\{x_1,\ldots,x_k\}$ and $V\subseteq \{x_1,\ldots,x_k\}$. We denote by
$X\upharpoonright V$ the team $\{s\upharpoonright V \mid s\in X\}$ with the domain $V$. The following proposition
 shows that
the truth of a \df-formula  depends only on the interpretations of the variables occurring free in the formula.
\begin{proposition}[\cite{va07, CaicedoDJ09}]\label{freevar} Let $\phi\in \df$ be any formula or $\phi\in \ifl$ a sentence. If 
 $V\supseteq \fr(\phi)$, then $\mA \models _X\phi$ if and only if $\mA \models _{X\upharpoonright V} \phi$.
\end{proposition}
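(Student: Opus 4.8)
The plan is to prove the statement by induction on the structure of the formula $\phi$, simultaneously for all teams $X$ whose domain contains $\fr(\phi)$ and all $V$ with $\fr(\phi) \subseteq V \subseteq \dom(X)$. The key technical point to isolate first is the behaviour of the restriction operation under the team-forming operations in Definition~\ref{def:semantics}: namely that $(X(F/x))\upharpoonright V$ and $(X\upharpoonright (V\setminus\{x\}))(F'/x)$ agree for a suitably induced function $F'$, and likewise $(X(A/x))\upharpoonright V = (X\upharpoonright(V\setminus\{x\}))(A/x)$ when $x \notin V$, with an analogous but trivial identity when $x \in V$. These are purely set-theoretic observations about assignments, so I would state them as a short preliminary computation rather than belabour them.

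For the base cases, let $\phi$ be a first-order literal (or a dependence atom, in the $\df$ case). Then $\mA \models_X \phi$ is a condition quantified over single assignments $s \in X$ (for literals) or over pairs $s, s' \in X$ (for $\dep$-atoms), and in each case the condition only inspects the values $s(z)$ for $z \in \fr(\phi) \subseteq V$. Hence $\mA \models_X \phi$ iff $\mA \models_{X\upharpoonright V}\phi$; for the $\dep$-atom one also notes that distinct $s, s'$ may become equal after restriction, but the determination condition is preserved in both directions precisely because agreement on the first $n-1$ terms forces agreement on the last one. The negated literal and negated dependence atom cases are handled the same way (the latter using that $X = \emptyset$ iff $X\upharpoonright V = \emptyset$).

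For the inductive step I would treat $\wedge$ trivially (applying the induction hypothesis to both conjuncts, noting $\fr$ of each conjunct is contained in $\fr(\phi) \subseteq V$); $\vee$ by splitting $X = Y \cup Z$ and observing that $X\upharpoonright V = (Y\upharpoonright V) \cup (Z\upharpoonright V)$, so a witnessing split for one side transfers to the other (in the backward direction one lifts a split of $X\upharpoonright V$ by taking preimages in $X$); the existential and universal quantifiers (slashed or not) using the preliminary identities on team-forming operations, together with the remark that $W$-independence of $F$ is preserved under passing between $X$ and $X\upharpoonright V$ since $W \subseteq \dom(X\upharpoonright V)$. I would also use the restriction that in $\iftwo$ we only consider sentences $\phi$: this keeps the slash set $W$ inside the surviving variables and avoids the known subtlety that the statement can fail for open $\ifl$-formulas, which is exactly why Proposition~\ref{freevar} is phrased with that hypothesis.

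The main obstacle is the existential quantifier case, and specifically producing the right witnessing function when moving in each direction between $X$ and $X\upharpoonright V$. Forward, given $F \colon X \to A$ witnessing $\mA \models_{X(F/x)}\phi$, one must check that $F$ factors through the restriction map $X \to X\upharpoonright(V\setminus\{x\})$ so that it induces a well-defined $F'$ on the restricted team — this is automatic when $x \in V$ but when $x \notin V$ one must argue using the induction hypothesis applied to $\psi$ on the team $X(F/x)$ (whose free variables lie in $V \cup \{x\}$) that we may assume $F$ depends only on the $V\setminus\{x\}$-coordinates; in the slashed case this is even built into $W$-independence provided $W \subseteq V$, and one should record that $W \subseteq \fr(\exists x/W\,\psi) \subseteq V$. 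Backward, one simply precomposes a witness on the restricted team with the restriction map. Once these bookkeeping identities are in place the whole argument is routine, so I would present the quantifier cases carefully and wave through the rest.
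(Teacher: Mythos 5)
First, note that the paper does not prove this proposition at all --- it is imported from \cite{va07} and \cite{CaicedoDJ09} --- so there is no in-paper argument to compare against; I can only judge your sketch on its merits. For the $\df$ half, your structural induction is the standard and correct route, but one step is glossed over in a way that matters: in the forward direction of the existential case, the team $X(F/x)\upharpoonright(V\cup\{x\})$ may assign \emph{several} $x$-values to a single assignment of $X\upharpoonright V$ (whenever $F$ does not factor through restriction), so it is in general not of the form $(X\upharpoonright V)(F'/x)$ for any function $F'$; you must pass to the sub-team cut out by a choice function, and for that you need Downward Closure (Proposition~\ref{Downward closure}), not merely the induction hypothesis. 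Your phrase ``we may assume $F$ depends only on the $V\setminus\{x\}$-coordinates'' is exactly this step, and the IH alone does not deliver it. Since downward closure is proved by its own independent induction, this is a fixable omission, but it should be made explicit.

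The $\ifl$ half contains a genuine gap. You claim the restriction to sentences saves the induction because it ``keeps the slash set $W$ inside the surviving variables,'' but that is not the source of the failure: in the paper's own counterexample $\exists x/\{y\}(x=y)$ one already has $W=\{y\}=\fr(\phi)\subseteq V$, and locality still fails, because $W$-independence of $F\colon X\to A$ is defined via agreement on $\dom(X)\setminus W$ and is therefore sensitive to variables of $\dom(X)$ lying \emph{outside} $V$ no matter how $W$ and $V$ are related (extra domain variables give $F$ more freedom). Consequently a structural induction on an $\ifl$ \emph{sentence} cannot take the proposition itself as induction hypothesis: the immediate subformulas are open, and for open $\ifl$-formulas the statement is simply false. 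The sentence case requires a genuinely different argument --- either a strengthened invariant relating pairs of teams that tracks the full domain and not just $V$, or the equivalence for sentences between team semantics and game-theoretic/Skolem semantics, which is how \cite{CaicedoDJ09} handles it. As written, your induction would break at the slashed-quantifier case.
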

The analogue of Proposition~\ref{freevar} does not hold for open formulas of \ifl. In other words, the truth of an $\ifl$-formula may depend on the interpretations of variables that do not occur
in the formula. For example, the truth of the formula $\phi$
\begin{equation*}
\phi = \exists x/\{y\}(x=y)
\end{equation*}
 in a team $X$ with domain $\{x,y,z\}$ depends on the values of $z$ in $X$, although $z$ does not occur in $\phi$. 

The following fact is a fundamental property of all formulas of \df and \ifl:

\begin{proposition}[\cite{va07, MR1465612}, Downward closure]\label{Downward closure}
Let $\phi$ be a formula of  \df or \ifl, $\mA$  a model, and $Y\subseteq X$ teams. Then $\mA\models_X \phi$ implies $\mA\models_Y\phi$. 
\end{proposition}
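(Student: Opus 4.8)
The plan is to proceed by induction on the structure of $\phi$, showing in each case that $\mA\models_X\phi$ together with $Y\subseteq X$ implies $\mA\models_Y\phi$. The argument handles \df and \ifl simultaneously: the common cases are the first-order literals and the connectives $\wedge$, $\vee$, $\exists$, $\forall$, to which one adds the slashed quantifiers $\exists x/W$, $\forall x/W$ for \ifl and the atoms $\dep(t_1,\ldots,t_n)$, $\neg\dep(t_1,\ldots,t_n)$ for \df.

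For the base cases, if $\phi$ is a first-order literal then $\mA\models_X\phi$ means $\mA,s\models_\fo\phi$ for every $s\in X$, which is inherited by every $s\in Y$. If $\phi$ is a dependence atom, its semantic condition is a universally quantified statement about pairs $s,s'$ from the team, and hence passes to the subteam $Y$. If $\phi$ is a negated dependence atom, then $\mA\models_X\phi$ forces $X=\emptyset$, so $Y=\emptyset$ and the claim is trivial (recall that $\mA\models_\emptyset\psi$ for every $\psi$).

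For the inductive steps, conjunction is immediate by applying the induction hypothesis to both conjuncts. For $\psi\vee\chi$, choose $X_1,X_2$ with $X=X_1\cup X_2$, $\mA\models_{X_1}\psi$ and $\mA\models_{X_2}\chi$; then $Y_1:=Y\cap X_1$ and $Y_2:=Y\cap X_2$ satisfy $Y=Y_1\cup Y_2$ and $Y_i\subseteq X_i$, so the induction hypothesis gives $\mA\models_{Y_1}\psi$ and $\mA\models_{Y_2}\chi$, hence $\mA\models_Y\psi\vee\chi$. For $\exists x\psi$ (and likewise $\exists x/W\psi$), pick the witnessing function $F\colon X\to A$ ($W$-independent in the slashed case) with $\mA\models_{X(F/x)}\psi$ and set $F':=F\upharpoonright Y$; then $Y(F'/x)\subseteq X(F/x)$, and $W$-independence of $F$ is inherited by $F'$ because that condition quantifies universally over pairs in the team. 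The induction hypothesis yields $\mA\models_{Y(F'/x)}\psi$, so $\mA\models_Y\exists x\psi$ (resp.\ $\exists x/W\psi$). For $\forall x\psi$ and $\forall x/W\psi$ one has $Y(A/x)\subseteq X(A/x)$, so the induction hypothesis applied to $\psi$ suffices.

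There is no genuine difficulty in the argument; the two spots that call for a moment of attention are the disjunction case, where one must use the restrictions $Y\cap X_i$ rather than reuse the sets $X_i$, and the slashed existential case, where one must observe that restricting the Skolem function to a subteam preserves $W$-independence. I would also note in passing that all the team operations used keep the domain of the team a superset of $\fr(\phi)$, so every appeal to the semantic clauses of Definition~\ref{def:semantics} is legitimate.
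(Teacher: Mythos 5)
Your proof is correct and is exactly the standard structural induction; the paper itself gives no proof of Proposition~\ref{Downward closure}, deferring to \cite{va07, MR1465612}, where the argument is the same one you give. The two points you flag --- intersecting $Y$ with the disjunction's splitting sets, and noting that $W$-independence survives restriction of the Skolem function to a subteam --- are indeed the only places requiring care, and you handle both properly.
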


The expressive
power of sentences of $\df$ and \ifl coincides with that of existential
second-order sentences:

\begin{theorem}\label{d equiv if equiv eso}
$\df\equiv \ifl\equiv \eso$.
\end{theorem}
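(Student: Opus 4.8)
The plan is to establish the chain of inclusions $\eso \leq \df \leq \ifl \leq \eso$, from which $\df \equiv \ifl \equiv \eso$ follows. (Strictly speaking these are equivalences of expressive power over sentences, since by Proposition~\ref{freevar} the satisfaction of a \df-sentence or \ifl-sentence depends only on the model, so the comparison $\leq$ defined in the preliminaries applies cleanly.) Two of the three inclusions have essentially been observed already in the introduction: the translation of partially ordered (Henkin) quantifiers into both \df and \ifl shows that everything expressible with Henkin prefixes — which by Enderton \cite{MR44:1546} and Walkoe \cite{MR43:4646} is exactly \eso — sits inside \df and inside \ifl, giving $\eso \leq \df$ and $\eso \leq \ifl$. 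The remaining work is the upper bound $\df \leq \eso$ and $\ifl \leq \eso$, i.e.\ the \emph{translation of team semantics into existential second-order logic}.

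The main step, then, is: for every \df-formula (resp.\ \ifl-formula) $\phi$ with free variables among $x_1,\dots,x_k$ there is an \eso-formula $\Phi(S)$ with a fresh $k$-ary relation variable $S$ such that for every model $\mA$ and every team $X$ with domain $\{x_1,\dots,x_k\}$ we have $\mA \models_X \phi \iff \mA \models \Phi(\rel(X))$, where $\rel(X)$ interprets $S$. This is proved by induction on $\phi$ following Definition~\ref{def:semantics} clause by clause. Atomic first-order literals translate to $\forall \bar{z}\,(S\bar{z} \to \lambda)$ where $\lambda$ is the literal; the dependence atom $\dep(t_1,\dots,t_n)$ translates to the \fo-sentence saying that any two tuples of $S$ agreeing on the first $n-1$ coordinates agree on the $n$-th; conjunction becomes conjunction; disjunction $\psi \vee \theta$ becomes $\exists S_\psi \exists S_\theta\,(\text{``}S = S_\psi \cup S_\theta\text{''} \wedge \Psi(S_\psi) \wedge \Theta(S_\theta))$; the existential quantifier $\exists x_j \psi$ becomes $\exists f\, \Psi(S')$ where $S'$ is defined from $S$ and a fresh function symbol $f$ by the splitting/supplementing operation, and the universal quantifier $\forall x_j \psi$ similarly with the full supplement $X(A/x_j)$. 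For \ifl the only genuinely new case is the slashed quantifier $\exists x_j / W\,\psi$: here one replaces the Skolem function $f$ by one whose arguments are only the coordinates \emph{outside} $W$, capturing $W$-independence directly in the arity of the quantified function. At the top, a sentence $\phi$ satisfied by the team $\{\emptyset\}$ translates to $\Phi(\{\emptyset\})$, which after absorbing the second-order quantifiers over the auxiliary symbols is a plain \eso-sentence; by Remark~\ref{eso sat}-style reasoning the presence of the extra function symbols is harmless since we existentially quantify them away.

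The hard part is not any single clause but getting the inductive invariant right — in particular making sure the existentially quantified \emph{functions} introduced for the quantifier cases can be folded into \eso (one uses that an existentially quantified function is expressible in \eso via its graph together with a totality-and-single-valuedness condition), and handling the disjunction clause, where the nondeterministic split $X = Y \cup Z$ must be rendered as an honest existential choice of two subrelations whose union is $S$ (note: union, not disjoint union, because of the failure of idempotence). One must also be slightly careful that the translation is uniform in the free-variable context so that the induction hypothesis is applicable to subformulas; fixing the domain $\{x_1,\dots,x_k\}$ once and for all (or tracking it explicitly) resolves this. The references \cite{va07} and \cite{MR1465612,CaicedoDJ09} already contain this translation, so in the paper it suffices to cite them; the reverse directions are the classical Enderton–Walkoe result combined with the Henkin-quantifier encodings exhibited in the introduction. $\qed$
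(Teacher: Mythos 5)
Your proposal is correct and follows essentially the same route as the paper: $\eso\le\df$ and $\eso\le\ifl$ via the Enderton--Walkoe characterization of Henkin quantifiers, and the converse inclusions via the standard translation of team semantics into \eso (which the paper simply cites from \cite{va07} and Hodges, and which you sketch accurately). The extra detail you give on the inductive translation is a correct expansion of exactly the construction the cited sources (and the paper's own Theorem~\ref{DtoESO}, in its two-variable form) carry out.
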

\begin{proof}The fact $\eso\le \df$ (and $\eso\le \ifl$) is based on the analogous result of \cite{MR44:1546,MR43:4646} for partially ordered quantifiers.  For the converse inclusions, see  \cite{va07} and  \cite{MR1638352}.
\end{proof}

\begin{proposition}[\cite{va07, MR1465612}]\label{FO extension}
Let $\phi$ be a formula of \df or \ifl without dependence atoms and without slashed quantifiers, i.e.,~$\phi$ is syntactically a first-order formula. Then for all $\mA$, $X$ and $s$:
\begin{enumerate}
\item $\mA\models _{\{s\}}\phi$ iff $\mA,s\models_{\fo}\phi$.
\item $\mA\models _X\phi$ iff for all $s\in X:\, \mA,s\models_{\fo}\phi$.
\end{enumerate}
\end{proposition}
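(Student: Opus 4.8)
The plan is to prove part~(2) by induction on the structure of $\phi$ and then read off part~(1) as the special case $X=\{s\}$, since then ``for all $s'\in\{s\}$'' just means $s'=s$. As $\phi$ is (by the paper's convention) in negation normal form, the base case is that $\phi$ is a first-order literal, and there the claim is literally clause~1 of Definition~\ref{def:semantics}: $\mA\models_X\phi$ iff $\mA,s\models_{\fo}\phi$ for all $s\in X$. Because $\phi$ has no dependence atoms and no genuinely slashed quantifiers, in the induction step only clauses 2--5 of Definition~\ref{def:semantics} are ever invoked (for \ifl one may equally invoke clause~6 with empty slash set, which coincides with clause~4, and clause~7 with empty slash set, which coincides with clause~5). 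The empty team is handled automatically throughout, as $\mA\models_\emptyset\chi$ for every $\chi$ while ``for all $s\in\emptyset$'' is vacuous. (One could instead prove part~(1) by its own induction and then derive part~(2) from downward closure plus a union-closure argument, but proving~(2) directly is shorter.)

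First I would dispatch the easy connective and quantifier cases. For $\phi=\psi\wedge\theta$ one unwinds clause~2 and applies the induction hypothesis to $\psi$ and to $\theta$, both on the same team $X$. For $\phi=\forall x\,\psi$ we have $\mA\models_X\phi$ iff $\mA\models_{X(A/x)}\psi$, which by the induction hypothesis holds iff $\mA,t\models_{\fo}\psi$ for every $t\in X(A/x)$, i.e.\ iff $\mA,s(a/x)\models_{\fo}\psi$ for all $s\in X$ and all $a\in A$, i.e.\ iff $\mA,s\models_{\fo}\forall x\,\psi$ for all $s\in X$. For $\phi=\exists x\,\psi$: if $\mA\models_X\phi$, fix $F\colon X\to A$ with $\mA\models_{X(F/x)}\psi$; the induction hypothesis gives $\mA,t\models_{\fo}\psi$ for all $t\in X(F/x)$, and since $s(F(s)/x)\in X(F/x)$ we obtain $\mA,s\models_{\fo}\exists x\,\psi$ for every $s\in X$. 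Conversely, if for each $s\in X$ there is $a_s\in A$ with $\mA,s(a_s/x)\models_{\fo}\psi$, then setting $F(s):=a_s$ makes every assignment of $X(F/x)$ of the form $s(a_s/x)$, so the induction hypothesis yields $\mA\models_{X(F/x)}\psi$ and hence $\mA\models_X\exists x\,\psi$.

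The hard part, though not deep, will be the disjunction $\phi=\psi\vee\theta$, where the bivalence of first-order satisfaction (``flatness'') is the crucial ingredient. For the forward direction, from $\mA\models_X\psi\vee\theta$ take teams $Y,Z$ with $X=Y\cup Z$, $\mA\models_Y\psi$ and $\mA\models_Z\theta$; by the induction hypothesis $\mA,s\models_{\fo}\psi$ for every $s\in Y$ and $\mA,s\models_{\fo}\theta$ for every $s\in Z$, so $\mA,s\models_{\fo}\psi\vee\theta$ for every $s\in X=Y\cup Z$. For the converse, assume $\mA,s\models_{\fo}\psi\vee\theta$ for all $s\in X$ and put $Y:=\{s\in X\mid\mA,s\models_{\fo}\psi\}$ and $Z:=\{s\in X\mid\mA,s\models_{\fo}\theta\}$; then $X=Y\cup Z$ precisely because classical \fo-semantics forces each $s$ to satisfy $\psi$ or $\theta$, and the induction hypothesis gives $\mA\models_Y\psi$ and $\mA\models_Z\theta$, whence $\mA\models_X\psi\vee\theta$. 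With the disjunction case settled, the induction, and therefore both parts of the proposition, is complete. The only obstacle worth flagging is exactly this bookkeeping of teams against pointwise satisfaction; everything else is a mechanical match with Definition~\ref{def:semantics}.
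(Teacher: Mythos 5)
Your proof is correct. The paper does not prove this proposition at all --- it is quoted from V\"a\"an\"anen's book and Hodges's paper --- and your argument is exactly the standard ``flatness'' induction found in those sources: the only point of substance is the disjunction case, where classical bivalence of $\fo$-satisfaction lets you split $X$ into the sets of assignments satisfying each disjunct, and you handle it correctly (as well as the empty team, the empty-slash identification, and the reading of part~(1) as the instance $X=\{s\}$).
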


\section{\texorpdfstring{Comparison of \ifl and \df}{Comparison of IF and D}}\label{comparison}

In this section we show that
\[\dtwo< \iftwo\leq \df^3.\]
We also further discuss the expressive powers and other logical properties of $\dtwo$ and $\iftwo$.

\begin{lemma}\label{dtoif} For any formula $\phi\in \dtwo$
 there is a formula $\phi^*\in \iftwo$ such that for all structures
  $\mA$ and  teams $X$, where $\dom(X)=\{x,y\}$, it holds that  
\begin{eqnarray*}
 \mA\models _X\phi &\Leftrightarrow& \mA\models _X\phi^*.
 \end{eqnarray*}
\end{lemma}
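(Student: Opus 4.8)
The plan is to translate $\dtwo$ into $\iftwo$ by replacing each dependence atom with an equivalent ``quantify and compare'' pattern built from a fresh slashed quantifier, relying on the well-known correspondence between $\dep(t_1,\dots,t_n)$ and the $\ifl$-formula that says ``there is a value for a bound variable, depending only on $t_1,\dots,t_{n-1}$, that happens to equal $t_n$.'' The catch is that in the two-variable setting there are no fresh variables available, so I cannot literally introduce a new variable to carry the witnessed functional value. Instead I will exploit the restricted shape of the dependence atoms allowed in Definition~\ref{def:diftwo}: only $\dep[t_1]$, $\dep[t_1,t_2]$, $\neg\dep[t_1]$, $\neg\dep[t_1,t_2]$ with $t_1,t_2\in\{x,y\}$ occur, so there are only finitely many cases to handle.

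First I would proceed by induction on the structure of $\phi$. All first-order literals, the Boolean connectives $\wedge,\vee$, and the (unslashed) quantifiers $\exists x,\exists y,\forall x,\forall y$ are common to both syntaxes, so for these the translation is the identity on the outermost connective and the induction hypothesis applies to the immediate subformulas; here I would just invoke the matching clauses of Definition~\ref{def:semantics}. The interesting cases are the dependence atoms. For $\dep[x,y]$ (the case $\dep[y,x]$ being symmetric) I would set
\[
(\dep[x,y])^* \;:=\; \exists x/\{x\}\, (x=y),
\]
and argue that $\mA\models_X \dep[x,y]$ iff, choosing for each $s\in X$ the witness $F(s):=s(y)$, the function $F$ is $\{x\}$-independent (i.e.\ depends only on the value of $y$) — and $F$ being $\{x\}$-independent is precisely the condition that $s(x)=s'(x)$ whenever $s(y)=s'(y)$, wait, I need $F$ depending only on the variables \emph{outside} the slash set $\{x\}$, namely only on $y$; so $F$ well-defined as a function of $s(y)$ is exactly the statement that equal $y$-values force equal $x$-values, which is $\dep[y,x]$. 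I would therefore take $(\dep[x,y])^* := \exists y/\{y\}\,(x=y)$ after this bookkeeping check, and similarly $(\dep[x])^* := \exists x/\{x,y\}\,(x=x)$ — reading off from Definition~\ref{def:semantics}(6) that a constant-across-$X$ witness exists iff the team assigns $x$ a single value, which is the meaning of $\dep[x]$. The negated atoms $\neg\dep[t_1]$ and $\neg\dep[t_1,t_2]$ are, by clause~(9) of Definition~\ref{def:semantics}, true exactly on the empty team; I would realize this in $\iftwo$ by any formula whose truth set is $\{\emptyset\}$, e.g.\ $x=y \wedge \neg\, x=y$, since by Proposition~\ref{Downward closure}-style reasoning a conjunction of a literal and its negation holds only on the empty team (for $s\in X$ both $\mA,s\models x=y$ and $\mA,s\models\neg x=y$ would be required).

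The main obstacle — and the step that needs the most care — is getting the slash sets and the direction of the dependence exactly right, because the semantics of a slashed quantifier (clause~6) quantifies over \emph{functions} $F$ that are $W$-independent, and one must verify that the mere \emph{existence} of such an $F$ satisfying the inner equality is equivalent to the dependence atom, in an arbitrary team $X$ with $\dom(X)=\{x,y\}$, not just on singletons. I would handle this by unwinding both sides: on the $\dtwo$ side via clause~(8), on the $\iftwo$ side via clause~(6) together with the definition of $W$-independence, and checking the two implications. A secondary point to watch is that after rewriting a dependence atom the translated subformula may have a free variable (e.g.\ $y$) that must be in $\dom(X)$; since $\dom(X)=\{x,y\}$ throughout and $\fr[\phi^*]\subseteq\{x,y\}$ by construction, this causes no problem, but it should be noted. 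Finally, since all these replacements are local and the non-atomic cases are handled by the identity translation plus the induction hypothesis, the equivalence $\mA\models_X\phi \Leftrightarrow \mA\models_X\phi^*$ propagates through the whole formula, completing the induction.
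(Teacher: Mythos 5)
Your overall strategy (identity on the first-order skeleton, local replacement of the dependence atoms by slashed existential quantifiers, and mapping $\neg\dep(\cdot)$ to a formula true only on the empty team) is the right one and coincides with the paper's; your handling of the negated atoms and of the structural cases is fine. However, both of your final translations of the \emph{positive} dependence atoms are wrong: the formulas you settle on are tautologies over teams with domain $\{x,y\}$ and hence cannot be equivalent to the atoms. For $(\dep(x,y))^* := \exists y/\{y\}(x=y)$, the function $F(s):=s(x)$ is $\{y\}$-independent (by definition, $\{y\}$-independence only requires $F$ to be determined by the variables \emph{outside} the slash set, here $x$) and satisfies $\mA\models_{X(F/y)} x=y$, so the formula holds in every team with domain $\{x,y\}$ over every structure. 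Likewise $(\dep(x))^* := \exists x/\{x,y\}(x=x)$ holds everywhere because the matrix $x=x$ is valid, so any constant witness works. The slip is visible in your own ``bookkeeping check'': you correctly observe that an $\{x\}$-independent witness may depend only on $y$, but you then force the witness to \emph{equal} the very variable it is permitted to depend on, which trivializes the condition.

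The fix is to decouple the two roles: the slash set must hide exactly the variable that the matrix $x=y$ compares the witness against, so that the witness is forced to reproduce a value it is not allowed to see. Concretely, $\dep(x,y)$ should translate to $\exists x/\{y\}(x=y)$ --- the witness for the requantified $x$ may depend only on the old value of $x$, yet must equal $s(y)$, which is possible iff $s(y)$ is a function of $s(x)$ on $X$ --- and $\dep(x)$ should translate to $\exists y/\{x,y\}(x=y)$, where a constant witness must equal $s(x)$ for every $s\in X$, i.e., $x$ is constant on the team. (These are the translations the paper uses.) With these corrections your induction goes through as you describe; without them the right-to-left direction of your claimed equivalence fails on any team witnessing the failure of the dependence atom.
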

\begin{proof}
The translation $\phi\mapsto \phi^*$ is defined as follows. For  first-order literals the translation is the identity, and negations of dependence atoms are translated by $\neg x=x$. The remaining cases are defined as follows:
\begin{eqnarray*}
\dep(x) &\mapsto& \exists y/\{x,y\}(x=y)\\
\dep(x,y) &\mapsto& \exists x/\{y\}(x=y)\\
\phi \wedge \psi   &\mapsto& \phi^* \wedge \psi^*\\
\phi \vee \psi   &\mapsto& \phi^* \vee \psi^*\\
\exists x \phi   &\mapsto& \exists x  \phi^* \\
\forall  x \phi   &\mapsto& \forall x  \phi^*
\end{eqnarray*}
 The claim of the lemma can now be proved using induction on $\phi$.
The only non-trivial cases are the dependence atoms. We consider the case
where $\phi$ is of the form $\dep(x,y)$. 

Let us assume that $\mA\models_X \phi$. Then there is a function $F\colon A\to A$ such that
\begin{equation}\label{dtoif6}
\text{for all }s\in X:\ s(y)=F(s(x)).
\end{equation}
Define now  $F'\colon X \to A$ as follows: 
\begin{equation}\label{dtoif3}
F'(s):=F(s(x)).
\end{equation} 
$F'$ is  $\{y\}$-independent since, if $s(x)=s'(x)$, then
\[F'(s)= F(s(x))
{=}F(s'(x))=F'(s').\]
It remains to show that 
\begin{equation}\label{dtoif2}
 \mA \models_{X(F'/x)}(x=y).
\end{equation}
Let $s\in X(F'/x)$. Then
\begin{equation}\label{dtoif5}
s= s'(F'(s')/x)\text{ for some }s'\in X.
\end{equation}
Now
\[s(x)\stackrel{\eqref{dtoif5}}{=}F'(s')\stackrel{\eqref{dtoif3}}{=}F(s'(x))\stackrel{\eqref{dtoif6}}{=}s'(y)\stackrel{\eqref{dtoif5}}{=}s(y).\]
Therefore, \eqref{dtoif2} holds, and hence also 
\[ \mA \models_X \exists x/\{y\}(x=y).\]

Suppose then that $\mA\not \models_X \phi$. Then there must be $s,s'\in X$ such that $s(x)=s'(x)$ and $s(y)\neq s'(y)$.
We claim now that 
\begin{equation}\label{dtoif1} 
\mA\not \models_X \exists x/\{y\}(x=y).
\end{equation}
Let $F\colon X\rightarrow A$ be an arbitrary $\{y\}$-independent function. Then, by 
$\{y\}$-independence, $F(s)=F(s')$ and since additionally $s(y)\neq s'(y)$, we have
\[s(F(s)/x)(x)=F(s)\neq s(y)=s(F(s)/x)(y)\]
or
\[s'(F(s')/x)(x)=F(s')\neq s'(y)=s'(F(s')/x)(y).\]
This implies that
\[\mA \not\models_{X(F/x)}(x=y),\]
since $s(F(s)/x),\,s'(F(s')/x) \in X(F/x)$.

Since $F$ was arbitrary, we may conclude that \eqref{dtoif1} holds.
\end{proof}

Next we show a translation from $\iftwo$ to $\df^3$.
\begin{lemma}\label{iftod} For any formula $\phi\in \iftwo$
 there is a formula $\phi^*\in \df^3$ such that for all
 structures  $\mA$ and  teams $X$, where $\dom(X)=\{x,y\}$, it holds that  
\begin{eqnarray*}
 \mA\models _X\phi &\Leftrightarrow& \mA\models _X\phi^*.
 \end{eqnarray*}
\end{lemma}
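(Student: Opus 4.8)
The plan is to define the translation $\phi\mapsto\phi^*$ by structural induction on $\phi\in\iftwo$, keeping the two variables $x,y$ and introducing the third variable $z$ only to simulate the slashed quantifiers. For first-order literals, conjunction, disjunction, and the ordinary universal quantifiers $\forall x$, $\forall y$ the translation is the obvious homomorphic one (identity on literals, $(\psi\wedge\theta)^*=\psi^*\wedge\theta^*$, etc.), and the induction steps for these cases are immediate from Definition~\ref{def:semantics} together with the induction hypothesis, since the domain of the team stays $\{x,y\}$. The only genuinely interesting cases are the slashed existential quantifiers $\exists x/W\,\psi$ and $\exists y/W\,\psi$, where $W\subseteq\{x,y\}$.

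The idea for $\exists x/W\,\psi$ is to first copy the ``forgotten'' information away, then requantify $x$ with an ordinary existential quantifier guarded by a dependence atom that forces the witness to depend only on the remembered coordinates, and finally restore the copied value. Concretely, suppose $W=\{y\}$ (the case $W=\{x\}$ or $W=\{x,y\}$ is analogous, with $\dep[\,]$, i.e.\ a constancy atom, replacing the binary dependence atom). The key point is that a $\{y\}$-independent Skolem function $F\colon X\to A$ for $x$ is exactly a function whose value depends only on the current value of $x$; but since we are about to overwrite $x$, we must not lose that value. So the translation is of the form
\[
(\exists x/\{y\}\,\psi)^* \;:=\; \exists z\bigl(\dep[x,z]\wedge z=x \wedge \forall x\,(\dep[z,x]\wedge \psi')\bigr),
\]
where $\psi'$ is obtained from $\psi^*$ by suitably accounting for the fact that the old value of $x$ now lives in $z$ — in the simplest situations $\psi'=\psi^*$ suffices, but one has to check whether $z$ ever needs to be read back; if the original formula had $x$ free in a way that the slash set ``sees'', a harmless extra dependence atom $\dep[z,\ldots]$ pins things down. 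The point of $\dep[z,x]$ inside $\forall x$ is: after $\forall x$ the team contains, for each original assignment $s$, all pairs $(a,s(x))$ in the $(x,z)$-plane; requiring $x$ to be functionally determined by $z$ (whose value is $s(x)$) forces us to pick, for each value of $z$, a single value of $x$, i.e.\ exactly a function of the remembered coordinate — matching $\{y\}$-independence. One then verifies both directions: given a $\{y\}$-independent $F$ witnessing $\mA\models_X\exists x/\{y\}\,\psi$, define the three Skolem functions ($z\mapsto s(x)$, then the split for the dependence atom trivial, then $x\mapsto F(s)$) and use the induction hypothesis; conversely, from a winning strategy on the right one reads off a $\{y\}$-independent $F$.

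I expect the main obstacle to be bookkeeping around the semantics of $\vee$ and the non-classical behaviour of teams: because disjunction splits the team and dependence atoms are evaluated on the whole subteam, one must make sure the ``copy into $z$ / restore from $z$'' gadget interacts correctly with splits, and in particular that the negation of a dependence atom (true only of the empty team, by clause~9) lines up with $\neg x=x$ as used in Lemma~\ref{dtoif}. A secondary subtlety is that, by the discussion after Proposition~\ref{freevar}, truth of an open $\ifl$-formula can depend on variables not occurring in it, whereas $\df$-formulas enjoy the locality of Proposition~\ref{freevar}; since the statement fixes $\dom(X)=\{x,y\}$ this cannot actually bite, but the proof of the slashed-quantifier case must use this restriction explicitly — the translation is only claimed correct for two-variable teams, and that is exactly what makes three variables enough. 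Once the slashed-quantifier case is pinned down, the remaining inductive cases are routine and the lemma follows.
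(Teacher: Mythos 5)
Your overall strategy coincides with the paper's: translate homomorphically except at slashed quantifiers, and for $\exists x/\{y\}\,\psi$ first copy the current value of $x$ into a fresh variable $z$ and then requantify $x$ guarded by $\dep(z,x)$, so that the new value of $x$ is a function of the old one --- which is exactly $\{y\}$-independence. However, the formula you actually display, $\exists z\bigl(\dep(x,z)\wedge z=x\wedge\forall x(\dep(z,x)\wedge\psi')\bigr)$, does not work, and the reason is not a typo but a misreading of team semantics. Under Definition~\ref{def:semantics}, $\forall x$ replaces the team by $X(A/x)$, duplicating every assignment over \emph{all} values of $x$, and the conjunct $\dep(z,x)$ is then a test on that entire team (conjunction does not split teams). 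For any structure with $|A|\ge 2$ and any nonempty team, the expanded team contains, for a fixed assignment $s$, two extensions with the same $z$-value $s(x)$ but different $x$-values, so $\dep(z,x)$ is false. A dependence atom after a universal quantifier never ``forces us to pick'' a single value; only an existential quantifier introduces a choice function that the atom can then constrain. The correct gadget --- the one the paper uses --- is $\exists z\bigl(x=z\wedge\exists x(\dep(z,x)\wedge\psi^*)\bigr)$, where the Skolem function for the inner $\exists x$ is read off from (respectively, yields) the $\{y\}$-independent function $F$. Your prose does say ``ordinary existential quantifier,'' so you have the right idea, but the written translation is the universal one and must be repaired before the induction can go through.

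Two smaller gaps. First, you leave $\psi'$ undetermined (``one has to check whether $z$ ever needs to be read back''); in fact one can always take $\psi'=\psi^*$, because $z$ does not occur free in $\psi^*$, and by Proposition~\ref{freevar} satisfaction of a $\df$-formula depends only on the free variables, so the team $X(G/z)(F'/x)$ restricted to $\{x,y\}$ equals $X(F/x)$ and the induction hypothesis applies directly --- this is precisely where the locality issue you correctly flag gets discharged. Second, your parenthetical handles $W=\{x\}$ and $W=\{x,y\}$ together via a constancy atom; only the latter uses $\dep(x)$, whereas $W=\{x\}$ needs the binary atom $\dep(y,x)$ and in fact no auxiliary variable at all, since $y$ is not being overwritten.
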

\begin{proof}
The claim follows by the following translation $\phi\mapsto \phi^*$:
For atomic and negated atomic formulas the translation is the identity, and for propositional connectives and first-order quantifiers it is defined in the obvious inductive way. The only non-trivial cases are the slashed quantifiers:
\[\begin{array}{rcl}
\exists x /\{y\}\psi  &\mapsto&  \exists z(x=z\wedge \exists x (\dep(z,x) \wedge \psi^*)),\\
\exists x /\{x\}\psi  &\mapsto&  \exists x (\dep(y,x) \wedge \psi^*),\\
\exists x /\{x,y\}\psi  &\mapsto&  \exists x(\dep(x)\wedge \psi^*).\\
\end{array}\]
Again, the claim can be proved using induction on $\phi$.
We consider the case where $\phi$ is of the form $\exists x /\{y\}\psi$.
Assume $\mA\models _X \phi$. Then there is a $\{y\}$-independent function $F\colon X\rightarrow A$ such that
\begin{equation}\label{iftod0}
\mA\models_{X(F/x)} \psi. 
\end{equation}
By $\{y\}$-independence, $s(x)=s'(x)$ implies that $F(s)=F(s')$ for all $s,s'\in X$. Our goal is to show that 
\begin{equation}\label{iftod1}
\mA\models_X \exists z(x=z\wedge \exists x (\dep(z,x) \wedge \psi^*)).
\end{equation}
Now, \eqref{iftod1} holds if for $G\colon X \to A$ defined by $G(s)=s(x)$ for all $s\in X$ it holds that
\begin{equation}\label{iftod2}
\mA\models_{X(G/z)} \exists x (\dep(z,x) \wedge \psi^*).
\end{equation}
Define $F'\colon X(G/z)\rightarrow A$ by
$F'(s)=F(s\upharpoonright\{x,y\})$. Now we claim that
\[ \mA\models_{X(G/z)(F'/x)} \dep(z,x) \wedge \psi^*,\]
implying \eqref{iftod2} and hence \eqref{iftod1}.

First we show that 
\begin{equation}\label{iftod42}
\mA\models_{X(G/z)(F'/x)} \dep(z,x).
\end{equation}
At this point it is helpful to note that every $s\in {X(G/z)(F'/x)}$ arises from an $s'\in X$ by first copying the value of $x$ to $z$ and then replacing the value of $x$ by $F(s\upharpoonright\{x,y\})$, i.e.,~that $s(z)=s'(G(s')/z)(z)=G(s')=s'(x)$ and $s(x)=F(s')$.
Now, to show \eqref{iftod42}, let $s_1,s_2\in X(G/z)(F'/x)$ with $s_1(z)=s_2(z)$ and let $s'_1,s'_2\in X$ as above, i.e., $s_1$ (resp.~$s_2$) arises from $s'_1$ (resp.~$s'_2$). Then it follows that $s'_1(x)=s'_2(x)$. Hence, by $\{y\}$-independence, $F(s'_1)=F(s'_2)$, implying that $s_1(x)=F(s'_1)=F(s'_2)=s_2(x)$ which proves \eqref{iftod42}.
Let us then show that  
\begin{equation}\label{iftod3}
 \mA\models_{X(G/z)(F'/x)}\psi^*.
\end{equation}
Note first that by the definition of the mapping $\phi\mapsto \phi^*$ the variable $z$ cannot appear free in $\psi^*$. By Proposition \ref{freevar}, the satisfaction of any $\df$-formula $\theta$ only depends on those variables in a team that appear free in $\theta$, therefore \eqref{iftod3} holds iff
 \begin{equation}\label{iftod4}
 \mA\models_{X(G/z)(F'/x)\upharpoonright\{x,y\}}\psi^*.
\end{equation}
We have chosen  $G$ and $F'$ in such a way that 
$$X(G/z)(F'/x)\upharpoonright\{x,y\}=X(F/x),$$
hence \eqref{iftod4} now follows from \eqref{iftod0} and the induction hypothesis.

We omit the proof of the converse implication which is analogous.
\end{proof}

For sentences, Lemmas \ref{dtoif} and \ref{iftod} 
now imply the following.

\begin{theorem}\label{dtwo le if2}
$\dtwo\le \iftwo\le \df^3$
\end{theorem}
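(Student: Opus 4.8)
The plan is essentially to read off Theorem~\ref{dtwo le if2} as an immediate consequence of the two lemmas just proved, once we specialize from arbitrary teams to the team $\{\emptyset\}$ that defines truth of sentences. First I would recall that for a sentence $\phi$, the relation $\mA\models\phi$ is by definition $\mA\models_{\{\emptyset\}}\phi$; but the lemmas are stated for teams $X$ with $\dom(X)=\{x,y\}$, not for the team $\{\emptyset\}$ with empty domain, so a small bridging step is needed. The clean way to handle this is to invoke Proposition~\ref{freevar}: if $\phi$ is a sentence then $\fr(\phi)=\emptyset$, and for the team $X=\{\,s\,\}$ where $s$ is any single assignment with $\dom(s)=\{x,y\}$ (equivalently $X$ is the ``full'' extension of $\{\emptyset\}$ to domain $\{x,y\}$, or indeed any nonempty such team, using downward closure if one prefers a singleton), we have $\mA\models_X\phi$ iff $\mA\models_{X\upharpoonright\emptyset}\phi$ iff $\mA\models_{\{\emptyset\}}\phi$ iff $\mA\models\phi$. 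The same applies to $\phi^*$, which is again a sentence since the translations in Lemmas~\ref{dtoif} and~\ref{iftod} do not introduce new free variables.

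Concretely, for the inclusion $\dtwo\le\iftwo$: given a $\dtwo$-sentence $\phi$, let $\phi^*\in\iftwo$ be the formula provided by Lemma~\ref{dtoif}. Since the translation of Lemma~\ref{dtoif} replaces each dependence atom and connective by something with the same free variables (and negated dependence atoms by $\neg x=x$, whose only variable is $x$, which is already bound or absent — one should note $\fr(\phi^*)\subseteq\fr(\phi)=\emptyset$, so $\phi^*$ is a sentence). Then for every $\mA$, picking any team $X$ with $\dom(X)=\{x,y\}$ and $X\neq\emptyset$, Lemma~\ref{dtoif} gives $\mA\models_X\phi\Leftrightarrow\mA\models_X\phi^*$, and the bridging step above turns both sides into $\mA\models\phi$ and $\mA\models\phi^*$ respectively. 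Hence $\mA\models\phi\Leftrightarrow\mA\models\phi^*$ for all $\mA$, i.e.\ $\dtwo\le\iftwo$. The inclusion $\iftwo\le\df^3$ is obtained identically from Lemma~\ref{iftod}, again checking that the translation there ($\exists z(\dots)$, $\exists x(\dep(\dots)\wedge\psi^*)$, etc.) produces a sentence from a sentence and works over $\df^3$ rather than $\dtwo$.

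I do not expect any genuine obstacle here; the statement is a corollary and the proof is bookkeeping. The one point that deserves a careful sentence is the passage between the empty-domain team $\{\emptyset\}$ and a team over $\{x,y\}$: one must make sure to invoke Proposition~\ref{freevar}, which is stated for $\df$-formulas and for $\ifl$-sentences (exactly the cases we need — $\phi$ or $\phi^*$ is always either a $\df$-formula or an $\ifl$-sentence in the two applications), and note that the empty team is excluded because $\mA\models_\emptyset\theta$ holds vacuously for every $\theta$ and would not distinguish anything; taking $X$ nonempty (e.g.\ a singleton, justified by downward closure if desired) avoids this. With that remark in place the proof is two lines.

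\begin{proof}
Let $\phi$ be a $\dtwo$-sentence and let $\phi^*\in\iftwo$ be as in Lemma~\ref{dtoif}; inspecting the translation shows $\fr(\phi^*)\subseteq\fr(\phi)=\emptyset$, so $\phi^*$ is a sentence. Fix a structure $\mA$ and let $X$ be any nonempty team with $\dom(X)=\{x,y\}$. By Lemma~\ref{dtoif}, $\mA\models_X\phi$ iff $\mA\models_X\phi^*$. Since $\phi$ is a $\df$-formula and $\phi^*$ an $\ifl$-sentence, Proposition~\ref{freevar} applied with $V=\emptyset\supseteq\fr(\phi)$ (resp.\ $\fr(\phi^*)$) gives $\mA\models_X\phi$ iff $\mA\models_{\{\emptyset\}}\phi$ iff $\mA\models\phi$, and likewise $\mA\models_X\phi^*$ iff $\mA\models\phi^*$. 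Hence $\mA\models\phi$ iff $\mA\models\phi^*$; as $\mA$ was arbitrary, $\dtwo\le\iftwo$. The inclusion $\iftwo\le\df^3$ follows in exactly the same way from Lemma~\ref{iftod}: given an $\iftwo$-sentence $\phi$, its translation $\phi^*\in\df^3$ has no free variables, and for any nonempty team $X$ with $\dom(X)=\{x,y\}$ we get $\mA\models_X\phi$ iff $\mA\models_X\phi^*$, which as above is equivalent to $\mA\models\phi$ iff $\mA\models\phi^*$.
\end{proof}
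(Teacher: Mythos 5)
Your overall strategy is the paper's: specialize Lemmas~\ref{dtoif} and~\ref{iftod} to sentences via Proposition~\ref{freevar}. But there is a genuine gap, and it is exactly the point the paper singles out as ``important to note'': your claim that $\fr(\phi^*)\subseteq\fr(\phi)=\emptyset$, so that $\phi^*$ is again a sentence, is false. By the definition of free variables for slashed quantifiers, $\fr(\exists x/W\,\psi)=W\cup(\fr(\psi)\setminus\{x\})$, so the slash set itself injects free variables. For instance, $\dep(x)$ is translated to $\exists y/\{x,y\}(x=y)$, whose free-variable set is $\{x,y\}$, and these variables can survive to the top level: the $\dtwo$ sentence $\forall x\,\dep(x)$ translates to $\forall x\,\exists y/\{x,y\}(x=y)$, which has $y$ free. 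Once $\phi^*$ is open, your bridging step collapses: ``$\mA\models\phi^*$'' means $\mA\models_{\{\emptyset\}}\phi^*$, which is not even defined (the team's domain must contain $\fr(\phi^*)$), and Proposition~\ref{freevar} for $\ifl$ is only available for \emph{sentences}, so you cannot use it to move between teams for $\phi^*$. The same defect occurs in the $\iftwo\le\df^3$ direction, where the translation introduces $y$ (via $\dep(y,x)$) and $z$ into the free variables of $\phi^*$.

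The repair is the one the paper makes. One always has $\fr(\phi^*)\subseteq\{x,y\}$ (resp.\ $\{x,y,z\}$), so $\forall x\forall y\,\phi^*$ is a genuine sentence, and it --- not $\phi^*$ --- is the claimed translation. Letting $Y$ be the team of \emph{all} assignments with domain $\{x,y\}$ (not an arbitrary nonempty team: since $\phi^*$ may be open, different teams are not interchangeable for it), one has $Y(A/x)(A/y)=Y$, hence
$\mA\models_{\{\emptyset\}}\phi$ iff $\mA\models_Y\phi$ iff $\mA\models_Y\forall x\forall y\,\phi$ iff $\mA\models_Y\forall x\forall y\,\phi^*$ iff $\mA\models_{\{\emptyset\}}\forall x\forall y\,\phi^*$,
using Proposition~\ref{freevar} only on the sentences $\phi$ and $\forall x\forall y\,\phi^*$, the semantics of $\forall$, and Lemma~\ref{dtoif} (resp.~\ref{iftod}). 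With that adjustment your argument is correct; without it, the formula you exhibit as the equivalent sentence need not be a sentence at all.
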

\begin{proof} 
The claim follows by Lemmas \ref{dtoif} and \ref{iftod}.
 First of all, if $\phi$ is a sentence of $\ifl$ or $\df$, 
then, by Proposition \ref{freevar}, for every model $\mA$ and team $X\neq \emptyset$ 
\begin{equation}\label{sentences}
\mA \models_X \phi\text{ iff } \mA \models_{\{\emptyset\}} \phi.
\end{equation}
It is important to note that, even if $\phi\in \dtwo$ is a sentence, it
 may happen that $\phi^*$ has free variables since variables
 in $W$ are regarded as free in subformulas of $\phi^*$ of the form
 $\exists x /W\psi$. However, this is not a problem. Let $Y$ be the set of all assigments of $\mA$ with the domain $\{x,y\}$. Now
\begin{eqnarray*}
 \mA\models_{\{\emptyset\}}\phi &\textrm{ iff }&   \mA\models_Y\phi \textrm{ iff } \mA\models_Y \forall x\forall y \phi \\ &\textrm{ iff }&  \mA\models_Y\forall x\forall y \phi^* \text{ iff } \mA\models_{\{\emptyset\}}\forall x\forall y \phi^* ,
\end{eqnarray*}
where the first and the last equivalence hold by \eqref{sentences}, the second by the semantics of the universal quantifier and the third by Lemma \ref{dtoif}. An analogous argument can be used to show that for every sentence $\phi\in \iftwo$ there is an equivalent sentence of the logic $\df^3$.
%
%
\end{proof}

\subsection{\texorpdfstring{Examples of properties definable in \dtwo}{Examples of properties definable in D\textasciicircum2}}
We end this section with examples of definable classes of structures in \dtwo (and in \iftwo by Theorem \ref{dtwo le if2}).

\begin{proposition}

The following properties can be expressed in \dtwo:
\begin{enumerate}[a)]
\item\label{example1} For unary relation symbols $P$ and $Q$, \dtwo can express $|P|=|Q|$. This shows $\dtwo \not \le \fo$.
\item\label{example2} If the vocabulary of $\mA$ contains a constant $c$, then \dtwo can express that $A$ is infinite.
\item\label{example3} $|A|\le k$ can be expressed already in $\df^1$.
\end{enumerate}
\end{proposition}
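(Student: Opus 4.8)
\emph{Proof idea.} The plan is to write down explicit $\dtwo$ (resp.\ $\df^1$) sentences and to verify their truth conditions directly from the team semantics, using Downward closure (Proposition~\ref{Downward closure}) and, for part~a), the Cantor--Schr\"oder--Bernstein theorem. For a), I would take $\phi := \phi_1\wedge\phi_2$ where
\[ \phi_1 := \forall x\exists y\,\bigl(\neg Px \vee (Qy \wedge \dep(x,y))\bigr),\qquad \phi_2 := \forall x\exists y\,\bigl(\neg Qx \vee (Py \wedge \dep(x,y))\bigr). \]
Evaluating $\phi_1$ on $\{\emptyset\}$: after $\forall x\exists y$ the team is $X=\{\,\{x\mapsto a,\,y\mapsto g(a)\}\mid a\in A\,\}$ for a chosen $g\colon A\to A$, and the crux is that $\mA\models_X \neg Px\vee(Qy\wedge\dep(x,y))$ iff $g\upharpoonright P$ is an injection of $P$ into $Q$. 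To see this I would note that the disjunct $\neg Px$ can only be satisfied by assignments whose $x$-value lies outside $P$, so in any witnessing split every assignment with $x\in P$ must fall into the other part $Z$; by Downward closure it then suffices to evaluate $Qy\wedge\dep(x,y)$ on exactly $Z=\{s\in X\mid s(x)\in P\}$, where $Qy$ says $g(P)\subseteq Q$ and $\dep(x,y)$ says $g\upharpoonright P$ is injective. Hence $\mA\models\phi_1$ iff there is an injection $P\hookrightarrow Q$, and symmetrically $\mA\models\phi_2$ iff there is an injection $Q\hookrightarrow P$; since $\mA\models\phi_1\wedge\phi_2$ iff both hold, Cantor--Schr\"oder--Bernstein gives $\mA\models\phi$ iff $|P|=|Q|$. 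Finally, $|P|=|Q|$ is not $\fo$-definable (e.g.\ by an Ehrenfeucht--Fra\"iss\'e argument already over finite structures), so $\dtwo\not\le\fo$.

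For b), with a constant $c$ in the vocabulary I would take $\phi_\infty := \forall x\exists y\,\bigl(\dep(x,y)\wedge y\neq c\bigr)$. As above, after $\forall x\exists y$ the team is $\{\,\{x\mapsto a,\,y\mapsto g(a)\}\mid a\in A\,\}$, and the matrix holds iff $g$ is injective and $c^{\mA}\notin\rng(g)$. Such a $g$ exists iff $A$ is Dedekind-infinite, i.e.\ (in $\mathrm{ZFC}$) iff $A$ is infinite; the constant is genuinely used here, since with only two variables one cannot otherwise name a value omitted by $g$. For c), I would take the $\df^1$ sentence $\phi_{\le k} := \forall x\,\bigl(\dep(x)\vee\cdots\vee\dep(x)\bigr)$ with $k$ disjuncts. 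After $\forall x$ the team is $X_0=\{\,\{x\mapsto a\}\mid a\in A\,\}$ with $\rel(X_0)=A$; a team $Y$ with domain $\{x\}$ satisfies $\dep(x)$ iff all its assignments agree on $x$, i.e.\ $|\rel(Y)|\le 1$; and $\vee$ splits $X_0$ as $Y_1\cup\cdots\cup Y_k$. Such a split with each $Y_i$ satisfying $\dep(x)$ exists iff $|A|\le k$, which gives the claim.

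The only step that needs real care is part~a): identifying precisely which sub-team the dependence atom is evaluated on inside the disjunction $\neg Px\vee(\dots)$ — this is where Downward closure enters, to justify restricting attention to the canonical split into the $P$-part and its complement — and then invoking Cantor--Schr\"oder--Bernstein so that the argument covers infinite as well as finite structures rather than merely giving two one-sided size bounds. Parts b) and c) are then routine variations on the same team-splitting and dependence-atom ideas.
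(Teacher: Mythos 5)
Your overall strategy is the same as the paper's (the same $\forall x\exists y$ pattern with a dependence atom enforcing injectivity, the same analysis of the disjunction split, the same $k$-fold disjunction of $\dep(x)$ for part c), and the reduction of part a) to two injections), but parts a) and b) contain a concrete error: you have the dependence atom the wrong way around. Under Definition~\ref{def:semantics}, $\dep(t_1,\dots,t_n)$ asserts that $t_n$ is \emph{determined by} $t_1,\dots,t_{n-1}$. After $\forall x\exists y$ the team is $X=\{\,\{x\mapsto a,\,y\mapsto g(a)\}\mid a\in A\,\}$ for some $g\colon A\to A$, so $y$ is automatically a function of $x$ on $X$ and on every subteam; hence your atom $\dep(x,y)$ is vacuously true and says nothing about injectivity. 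Consequently $\phi_1$ as written only asserts the existence of \emph{some} function mapping $P$ into $Q$, and $\phi_\infty$ is satisfied by every structure with at least two elements (take $g$ constant with value $\neq c^{\mA}$). What you need is $\dep(y,x)$, i.e.\ ``$x$ is determined by $y$'', which on $X$ (or on the subteam $Z$) says exactly that $g$ (or $g\upharpoonright P$) is injective; this is what the paper's formulas $\forall x\exists y(\dep(y,x)\wedge(\neg P(x)\vee Q(y)))$ and $\forall x\exists y(\dep(y,x)\wedge \neg c=y)$ use.

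Once the atom is flipped, the rest of your argument goes through: your identification of the forced part $Z=\{s\in X\mid s(x)\in P^{\mA}\}$ of the split and the appeal to Downward closure (Proposition~\ref{Downward closure}) are correct, and placing the (corrected) atom inside the disjunct — so that injectivity is only required on $P$ rather than on all of $A$, as in the paper's version — is a legitimate minor variation; both yield ``there is an injection of $P^{\mA}$ into $Q^{\mA}$''. Your explicit appeal to Cantor--Schr\"oder--Bernstein just makes precise the paper's ``it suffices to express $|P|\le|Q|$''. Part c) is correct and identical to the paper's proof.
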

\begin{proof}
Let us first consider part \ref{example1}). Clearly, it suffices to express $|P|\le |Q|$. Define $\phi$ by
\[\phi:= \forall x\exists y( \dep(y,x)\wedge (\neg P(x)\vee Q(y))).\]
Now,  $\mA\models \phi$\quad iff\quad there is an injective function $F\colon A\rightarrow A$ such that $F[P^{\mA}]\subseteq Q^{\mA}$ \quad iff\quad $|P^{\mA}|\le |Q^{\mA}|$.  

For part \ref{example2}), we use the same idea as above. Define $\psi$ by
\[\psi:= \forall x\exists y(\dep(y,x)\wedge \neg c=y).\]
Now, $\mA\models\psi$\quad iff\quad  there is an injective function $F\colon A\rightarrow A$ such that $c^{\mA}\notin F[A]$\quad iff\quad $A$ is infinite.

Finally, we show how to express the property from part \ref{example3}). Define $\theta$ as
\[\forall x( \bigvee_{1\le i\le k}\chi_i),    \]
where $\chi_i$ is $\dep(x)$. It is now immediate that $\mA\models \theta$ iff $|A|\le k$.
\end{proof}

It is interesting to note that, although part \ref{example1}) holds, the difference in \sat-complexity of $\fotwo(\I)$ and $\dtwo$ is a major one. The former is \sigmaoneone-hard \cite{grotro97} whereas the latter is decidable -- as is shown in section \ref{sec:satd2}. Part \ref{example1}) also implies that \dtwo does not have a zero-one law, since the property $|P|\le |Q|$ (which can be expressed in \dtwo) has the limit probability $\frac{1}{2}$.

\begin{proposition}\label{d less than if} 
$\dtwo <\iftwo$. This holds already in the finite.
\end{proposition}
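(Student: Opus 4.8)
The plan is to obtain the strict inclusion essentially for free from the contrast between the satisfiability problems of the two logics, rather than from a bespoke model construction or a pebble-game argument. By Theorem~\ref{dtwo le if2} we already have $\dtwo\le\iftwo$, so the whole task is to establish $\iftwo\not\le\dtwo$ and to do so in a way that survives restriction to finite structures. It suffices to establish the separation over finite structures: if some $\iftwo$-sentence has a class of finite models that is not $\dtwo$-definable, then, since $\dtwo\le\iftwo$, that sentence is in particular not equivalent to any $\dtwo$-sentence over all structures either.

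I would argue as follows. In one line: $\finsat[\dtwo]$ is decidable by Theorem~\ref{dtwo nexptime}, while $\finsat[\iftwo]$ is not, by Theorem~\ref{iftwo finsat complexity}, so $\iftwo$ cannot be a sublogic of $\dtwo$. To make this precise, given that ``$\le$'' is defined as a purely semantic relation, I would route the argument through the explicit construction behind Theorem~\ref{iftwo finsat complexity}: the proof in Section~\ref{ifsatsection} produces, recursively in an instance $\mathcal I$ of a $\sigmazeroone$-complete tiling problem, an $\iftwo$-sentence that is finitely satisfiable iff $\mathcal I$ is positive, and this can be packaged as a single $\iftwo$-sentence $\Phi$ over a vocabulary with an auxiliary relation coding the instance, whose finite models are exactly the positive instances together with their tilings. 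Were $\Phi$ equivalent over finite structures to a $\dtwo$-sentence $\chi$, then conjoining to $\chi$, for each $\mathcal I$, an $\fotwo$-sentence $\delta_{\mathcal I}$ that fixes the auxiliary relation to code $\mathcal I$ (expressible with two alternately-reused variables since $\mathcal I$ is a fixed finite object) would give a $\dtwo$-sentence $\chi\wedge\delta_{\mathcal I}$, computable from $\mathcal I$ and finitely satisfiable iff $\mathcal I$ is positive; decidability of $\finsat[\dtwo]$ would then decide the tiling problem, a contradiction. Hence $\iftwo\not\le\dtwo$ over finite structures, and with Theorem~\ref{dtwo le if2} this yields $\dtwo<\iftwo$; replacing $\finsat$, Theorem~\ref{iftwo finsat complexity} and finite tilings by $\sat$, Theorem~\ref{iftwo sat complexity} and a $\pizeroone$-complete tiling problem gives the same conclusion without the finiteness restriction.

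The real work is not in this proposition but in the results it invokes, which are proved later in the paper: the $\pizeroone$- and $\sigmazeroone$-hardness of $\sat[\iftwo]$ and $\finsat[\iftwo]$ in Section~\ref{ifsatsection}, obtained by encoding periodic and finite tilings into $\iftwo$ by means of slashed quantifiers, and the decidability (indeed \NEXPTIME-membership) of $\sat[\dtwo]$ and $\finsat[\dtwo]$ in Section~\ref{sec:satd2}, obtained via the effective reduction to $\foctwo$ (Theorem~\ref{DtoESO}) and Pratt-Hartmann's \NEXPTIME bound for $\foctwo$. Granting these, the present statement is essentially a corollary; the only point needing care when writing it out is the one flagged above, namely that ``$\le$'' is semantic, which is precisely why the argument is carried through the concrete tiling sentences of Section~\ref{ifsatsection} rather than through an abstract ``a decidable logic cannot contain an undecidable one'' principle.
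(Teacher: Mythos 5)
Your argument is correct and is essentially the paper's own proof: the paper separates the two logics by exactly this decidability contrast, observing that the grid-likeness sentence of Section~\ref{ifsatsection} (respectively $\phi_{\mathrm{torus}}$ for the finite case) cannot be equivalent to any $\dtwo$ sentence $\chi$, since otherwise $T\mapsto\chi\wedge\phi_T$ (resp.\ $\chi\wedge\gamma_T$) would be a computable reduction of an undecidable tiling problem to $\sat[\dtwo]$ (resp.\ $\finsat[\dtwo]$), contradicting Theorem~\ref{dtwo nexptime}. The only cosmetic difference is that you propose packaging the tiling instance into a single sentence $\Phi$ over a vocabulary with an auxiliary instance-coding relation, whereas the paper keeps the instance-dependent part in the separately conjoined $\fotwo$ formula, which sidesteps the (minor but real) issue of coding arbitrarily large tile sets over a fixed finite relational vocabulary.
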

\begin{proof} The property of being  grid-like  (see Definition \ref{def:gridlike})
can be expressed in \iftwo but not in \dtwo since \dtwo is decidable by  Theorem \ref{dtwo nexptime}. 
In the finite, there exists no \dtwo sentence equivalent to the
\iftwo sentence $\phi_{\mathrm{torus}}$ (see Section \ref{iffinsatsection}),
since the finite satisfiability problem of \dtwo is decidable.
\end{proof}




\section{\texorpdfstring{Satisfiability for \iftwo is undecidable}{Satisfiability for IF\textasciicircum2 is undecidable}}\label{ifsatsection}
In this section we will use tiling problems, introduced by Hao Wang in \cite{Wang:1961}, to show the undecidability of \sat[\iftwo] as well as \finsat[\iftwo].
%

In this paper a \emph{Wang tile} is a square in which each edge is assigned a color. It is a square that has four colors (up, right, down, left).
We say that a set of tiles can tile the $\N\times\N$ plane if a tile can be placed on every point $(i,j)\in \N\times\N$ s.t.~the right color of the tile in $(i,j)$ is the same as the left color of the tile in $(i+1,j)$ and the up color of the tile in $(i,j)$ is the same as the down color in the tile in $(i,j+1)$.
Notice that turning and flipping tiles is not allowed.

We then define some specific structures needed later.

\begin{definition}\label{def:grid}
The model $\mG:=(G,V,H)$ where
\begin{itemize}
 \item $G=\N\times\N$,
 \item $V=\{((i,j),(i,j+1))\subseteq G\times G\mid i,j\in\N\}$ and
 \item $H=\{((i,j),(i+1,j))\subseteq G\times G\mid i,j\in\N\}$
\end{itemize}
is called the \emph{grid}.

A finite model $\mD=(D,V,H,V',H')$ where
\begin{itemize}
 \item $D=\{0,\dots, n\}\times\{0,\dots,m\}$,
 \item $V=\{((i,j),(i,j+1))\subseteq D\times D\mid i\leq n,j<m\}\}$,
 \item $H=\{((i,j),(i+1,j))\subseteq D\times D\mid i<n,j\leq m\}$,
 \item $V'=\{((i,m),(i,0))\subseteq D\times D\mid i\leq n\}$ and
 \item $H'=\{((n,j),(0,j))\subseteq D\times D\mid j\leq m\}$
\end{itemize}
is called a \emph{torus}.
\end{definition}

\begin{definition}\label{def:tiling}
A set of \emph{colors} $C$ is defined to be an arbitrary finite subset of the natural numbers. The set of all \emph{(Wang) tiles} over $C$ is $C^4$, i.e.,~a tile is an ordered list of four colors, interpreted as the colors of the four edges of the tile in the order top, right, bottom and left.

Let $C$ be a set of colors, $T\subseteq C^4$ a finite set of tiles and $\mA=(A,V,H)$ a first-order structure with binary relations $V$ and $H$ interpreted as vertical and horizontal successor relations. Then a $T$-\emph{tiling} of $\mA$ is a total function $t\colon A \to T$ such that for all $x,y\in A$ it holds that
\begin{enumerate}[i)]
 \item $(t(x))_0=(t(y))_2$ if $(x,y)\in V$, i.e.,~the top color of $x$ matches the bottom color of $y$, and
 \item $(t(x))_1=(t(y))_3$ if $(x,y)\in H$, i.e.,~the right color of $x$ matches the left color of $y$.
\end{enumerate}
\end{definition}

Next we define the tiling problem for a structure $\mA=(A,V,H)$. 
\begin{definition}[\tiling]
A structure $\mA=(A,V,H)$ is called $T$-\emph{tilable} iff there is a $T$-tiling of $\mA$.

For any structure $\mA=(A,V,H)$ we define the problem
\[\tiling[\mA] := \{T\mid \text{$\mA$ is $T$-tilable}\}.\]

We say that a structure $\mB=(B,V,H,V',H')$ is $T$-tilable if and only if the structure $(B,V\cup V',H\cup H')$ is $T$-tilable.
Hence a torus $\mD=(D,V,H,V',H')$ is $T$-tilable if and only if the structure $(D,V\cup V',H\cup H')$ is $T$-tilable.
Now we define the problem
\[\tiling[\torus] := \{T\mid \text{there is a torus $\mD$ that is $T$-tilable}\}.
\]
\end{definition}
Note that the set \tiling[\mG] consists of all $T$ such that there is a $T$-tiling of the infinite grid and \tiling[\torus] consists of all $T$ such that there is a \emph{periodic} $T$-tiling of the grid.
Further note that \tiling[\torus] cannot be expressed in the form \tiling[\mD] for a fixed torus \mD since a fixed torus has a fixed size and we want the problem to be the question whether there is a torus of \emph{any} size.

We will later use the following two theorems to show the undecidability of \sat[\iftwo] and, resp., \finsat[\iftwo].

\begin{theorem}[\cite{be66}, \cite{Harel:1986}]\label{tiling pizeroone}
\tiling[\mG] is $\pizeroone$-complete.
\end{theorem}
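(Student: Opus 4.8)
The plan is to prove both directions of the $\pizeroone$-completeness of $\tiling[\mG]$ by standard methods from the theory of Wang tilings. Throughout, a finite tile set $T \subseteq C^4$ is encoded as a finite object, so $\tiling[\mG]$ is a set of finite strings and the classification makes sense.

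\textbf{Membership in $\pizeroone$.}
First I would observe that $\mG$ is $T$-tilable if and only if, for every $n \in \N$, the $n \times n$ initial square $\{0,\dots,n-1\}^2$ admits a locally consistent $T$-tiling (consistency only checked for the $V$- and $H$-edges internal to the square). The ``only if'' direction is immediate by restriction; the ``if'' direction is a compactness argument (König's Lemma): each larger square restricts to a tiling of each smaller one, so the finitely-branching tree of consistent square-tilings is infinite and hence has an infinite branch, which assembles into a tiling of the whole grid. Since for fixed $n$ the existence of a consistent tiling of the $n \times n$ square is a decidable property (finitely many functions $\{0,\dots,n-1\}^2 \to T$ to check), the statement ``$\mG$ is $T$-tilable'' is a countable conjunction of decidable predicates, i.e.\ $\pizeroone$.

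\textbf{$\pizeroone$-hardness.}
For hardness I would reduce from the complement of the halting problem (or directly cite the classical construction). The idea is: given a deterministic Turing machine $M$, build in a computable way a tile set $T_M$ such that $\mG$ is $T_M$-tilable if and only if $M$ does \emph{not} halt on the empty input. Rows of the grid encode successive configurations of $M$; the tiles are designed so that row $j{+}1$ is forced to be the successor configuration of row $j$ (tape symbols carried straight up, a ``head'' tile propagating the state transition to its neighbours), the bottom row is forced to encode the initial configuration, and there is simply \emph{no} tile available to represent a halting state. Then an infinite tiling exists exactly when the computation runs forever. Correctness of this standard simulation, together with undecidability of non-halting, gives $\pizeroone$-hardness; combined with membership this yields $\pizeroone$-completeness. (This is exactly the content attributed to \cite{be66} and \cite{Harel:1986}, so alternatively one simply invokes those references.)

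\textbf{Main obstacle.}
The genuinely delicate part is the hardness construction: getting the tile set $T_M$ to \emph{force} a faithful simulation — in particular ensuring the bottom row really encodes the initial configuration (one typically needs a distinguished origin/border mechanism, or one uses the ``any origin'' trick where every cell could be an origin but consistency forces a coherent computation), and ensuring that a single head is maintained and transitions are applied correctly with only local (edge-color) constraints. Membership in $\pizeroone$, by contrast, is a routine compactness argument. Since Theorem~\ref{tiling pizeroone} is cited rather than proved from scratch in this paper, in practice the authors will just reference \cite{be66,Harel:1986}; the sketch above records what those references establish.
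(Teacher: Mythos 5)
The paper does not prove this theorem at all --- it is imported verbatim from \cite{be66} and \cite{Harel:1986} --- so there is no internal proof to compare against; your reading of the situation is correct. Your membership argument is complete and standard: $T$-tilability of $\mG$ is equivalent, by K\"onig's Lemma applied to the finitely-branching tree of consistent tilings of initial squares, to the decidable-for-each-$n$ condition that the $n\times n$ square is consistently tilable, hence \pizeroone.

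On the hardness side, however, the construction you describe as your primary plan --- bottom row forced to encode the initial configuration, rows simulating successive configurations, no tile for a halting state --- does not work for $\tiling[\mG]$ as defined in this paper, and this is more than a ``delicate part.'' The tiling constraints here are purely local edge-matching conditions along $V$ and $H$; nothing in Definition~\ref{def:tiling} lets you force any particular tile to occur at the origin or along the borders of $\N\times\N$ (a tile set containing a single ``blank'' tile matching itself on all sides tiles everything, regardless of what other tiles you add). Indeed the \emph{origin-constrained} domino problem is the easy one (Wang/B\"uchi), and its undecidability was known before Berger; the whole point of \cite{be66} is that the \emph{unconstrained} problem requires an aperiodic, hierarchically self-similar tile set (Berger, later simplified by Robinson) into which computations are embedded at all scales --- this is exactly the ``any origin trick'' you mention in passing, and it is the actual mathematical content of the citation, not an optional variant. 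Since you flag the issue and defer to the references, as the paper itself does, this is acceptable as a record of what is being cited, but the reduction as you primarily sketched it would not establish \pizeroone-hardness of $\tiling[\mG]$.
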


\begin{theorem}[{\cite[Lemma~2]{guko72}}]\label{periodic tiling pizeroone}
\tiling[\torus] is $\sigmazeroone$-complete.
\end{theorem}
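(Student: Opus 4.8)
The plan is to prove the two halves of the completeness statement separately: membership in $\sigmazeroone$ by a straightforward enumeration argument, and $\sigmazeroone$-hardness by a many-one reduction from the halting problem for Turing machines to $\tiling[\torus]$.

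For membership, note that $T\in\tiling[\torus]$ if and only if there exist $n,m\in\N$ and a function $t$ from $\{0,\dots,n\}\times\{0,\dots,m\}$ to $T$ satisfying the adjacency constraints of Definition~\ref{def:tiling} for the relations $V\cup V'$ and $H\cup H'$ of the torus of that size. For each fixed pair $(n,m)$ this is a decidable property (only finitely many candidate functions $t$, each checked in finite time), so $\tiling[\torus]$ is the projection onto $T$ of a decidable ternary relation and hence recursively enumerable, i.e.\ $\sigmazeroone$.

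For hardness, fix a deterministic Turing machine $M$ started on the empty tape; the set of such $M$ that halt is $\sigmazeroone$-complete. I would construct, effectively from $M$, a tile set $\mathcal{T}_M$ such that $M$ halts iff $\mathcal{T}_M$ tiles some torus. The construction uses the classical computation tiles: a horizontal row of tiles encodes a tape configuration together with the head position and current state, and the colors shared across the horizontal edge between two stacked rows force the upper row to be the one-step successor configuration of the lower one. A distinguished family of tiles forces a genuine initial configuration along one horizontal seam (via a marker color that must occur exactly once per vertical period), and a second family of halting/reset tiles becomes usable precisely when $M$ enters a halting state; these reset tiles turn the remainder of each column into neutral blank tiles whose top colors are compatible with the bottom colors of the initial-configuration tiles. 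Consequently, if $M$ halts after $s$ steps using at most $w$ tape cells, one tiles a torus of height $s+O(1)$ and width $w+O(1)$: the bottom row is the initial configuration, the next $s$ rows trace the computation, the halting tiles fill the rest, the top row wraps around to meet the bottom, and the width is padded with blank columns chosen so the left and right borders also match. Conversely, reading any torus tiled by $\mathcal{T}_M$ upward from the forced initial-configuration seam yields a finite, eventually periodic run of $M$, and the column structure can close up only if that run reaches a halting state; hence $M$ halts.

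The main obstacle is the simultaneous handling of both wrap-around directions. Vertical periodicity is comparatively easy once the reset tiles are in place, but one must also (i) guarantee that \emph{every} torus tiled by $\mathcal{T}_M$ contains the initial-configuration seam, ruling out degenerate tilings such as a single self-compatible blank tile or computation tiles that never touch the seam, which is what the marker-color gadget is for, and (ii) confine the a priori unbounded-width computation onto a finite-width torus, which requires carefully designed left/right boundary tiles that describe the head's behaviour at the tape ends and how blank columns may be inserted to reach the chosen width. Checking that these boundary and marker gadgets introduce no spurious tilings is the technical heart of the argument and is exactly the content of \cite[Lemma~2]{guko72}.
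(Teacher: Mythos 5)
The paper offers no proof of this statement at all: it is imported verbatim from \cite[Lemma~2]{guko72}, so there is no in-paper argument to measure yours against. Your membership half is complete and correct: $\tiling[\torus]$ is the projection onto $T$ of the decidable relation ``$t$ is a $T$-tiling of the $(n,m)$-torus'', hence recursively enumerable. Your hardness half follows the standard line (computation rows, an initial-configuration seam, halting/reset tiles that close the vertical cycle), and you correctly identify where the technical weight sits.

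There is, however, one concrete step that cannot work as you describe it, and it is exactly the load-bearing one: ``a marker color that must occur exactly once per vertical period.'' Wang-tile constraints are purely local, so if the non-seam computation tiles already tile some torus by themselves, no choice of edge colors can force a seam tile to appear. And they do tile a torus by themselves whenever the deterministic machine $M$ enters a configuration loop instead of halting: the cycle of configurations is vertically periodic, uses bounded tape, and with your boundary tiles wraps horizontally as well. Your reduction would then wrongly place such a non-halting $M$ in the image of $\tiling[\torus]$. Also note that along a single column the vertical color graph of the computation tiles necessarily contains cycles (e.g.\ ``blank cell stays blank''), so one cannot make the non-seam tiles vertically acyclic without bounding the computation length. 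Repairing this requires a genuinely additional mechanism --- for instance an auxiliary counter track whose value must strictly increase up each column and can be reset only at the seam, so that no column of a finite-height torus can close without passing through the seam; Gurevich and Koryakov themselves instead obtain completeness via the effective inseparability of the set of tile sets admitting a periodic tiling from the set of tile sets admitting no tiling at all. Since you explicitly defer this verification to the citation (as the paper does), your text is defensible as a citation-backed sketch, but read as a self-contained proof it has a gap precisely at the point where the argument is nontrivial.
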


To prove the undecidability of \sat[\iftwo] (Theorem~\ref{iftwo sat complexity}) we will, for every set of tiles $T$, define a formula $\phi_T$ such that $\mA\models \phi_T$ iff $\mA$ has a $T$-tiling. Then we will define another formula $\phi_\mathrm{grid}$ and show that $\mA\models \phi_\mathrm{grid}$ iff $\mA$ contains (an isomorphic copy of) the grid as a substructure.
Therefore $\phi_T\wedge \phi_\mathrm{grid}$ is satisfiable if and only if there is a $T$-tiling of the grid.
For the undecidability of \finsat[\iftwo] (Theorem~\ref{iftwo finsat complexity}) we will define a formula $\phi_\mathrm{torus}$ which is a modification of the formula $\phi_\mathrm{grid}$.

\begin{definition}\label{def:tiling formula}
Let $T=\{t^0,\dots,t^k\}$ be a set of tiles, and
for all $i\leq k$, let $\cmd{right}(t^i)$ (resp.~$\cmd{top}(t^i)$) be the set
\[\{t^j\in\{0,\dots,k\}\mid t^i_1 = t^j_3\text{ (resp.~}t^i_0 = t^j_2)\},\]
i.e.,~the set of tiles matching $t^i$ to the right (resp.~top).

Then we define the first-order formulas
\[\begin{array}{r@{}l}
\psi_T:=\forall x \forall y \bigg(\Big(&H(x,y)\rightarrow \bigwedge\limits_{i\leq k}\big(P_i(x)\rightarrow \bigvee\limits_{t^j\in \cmd{right}(t^i)} P_j(y)\big)\Big)\ \wedge\\
\Big(&V(x,y)\rightarrow \bigwedge\limits_{i\leq k}\big(P_i(x)\rightarrow \bigvee\limits_{t^j\in \cmd{top}(t^i)} P_j(y)\big)\Big)\bigg),\\[3ex]
\multicolumn{2}{l}{\theta_T:=\forall x \bigvee\limits_{i\leq k}\big(P_i(x)\wedge \bigwedge\limits_{\substack{j\leq k\\j\neq i}} \neg P_j(x)\big)\text{ and}}\\
\multicolumn{2}{l}{\phi_T:=\psi_T \wedge \theta_T,}
\end{array}\]
over the vocabulary $V,H,P_0,\dots,P_k$. In an \ifl or \df context, $\phi \rightarrow \psi$ is considered to be an abbreviation of $\phi^\neg \vee \psi$, where $\phi^\neg$ is the negation normal form of $\neg \phi$.
\end{definition}

\begin{lemma}\label{tiling iff tiling formula}
Let $T=\{t_0,\dots,t_k\}$ be a set of tiles and $\mA=(A,V,H)$ a structure. Then $\mA$ is $T$-tilable iff there is an expansion $\mA^*=(A,V,H,P_0,\dots,P_k)$ of $\mA$ such that $\mA^*\models\phi_T$.
\end{lemma}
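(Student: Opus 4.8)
The plan is to prove the biconditional by unfolding the semantics of $\phi_T = \psi_T \wedge \theta_T$ and matching it against Definition~\ref{def:tiling} of a $T$-tiling. Since $\phi_T$ is syntactically a first-order sentence, Proposition~\ref{FO extension} tells us that $\mA^*\models\phi_T$ (i.e.\ $\mA^*\models_{\{\emptyset\}}\phi_T$) is equivalent to ordinary first-order truth $\mA^*\models_{\fo}\phi_T$, so the whole argument takes place in classical first-order semantics and no team-semantic subtleties arise.

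For the direction from left to right, I would start from a $T$-tiling $t\colon A\to T$ and define the expansion $\mA^*$ by setting $P_i^{\mA^*} := \{a\in A \mid t(a) = t^i\}$ for each $i\leq k$. First I would check $\mA^*\models\theta_T$: for any $a\in A$, $t(a)$ equals exactly one tile $t^i$, so $a\in P_i^{\mA^*}$ and $a\notin P_j^{\mA^*}$ for all $j\neq i$; this is precisely the disjunct witnessing the inner formula of $\theta_T$. Then I would check $\mA^*\models\psi_T$: take $a,b\in A$ with $(a,b)\in H$; if $a\in P_i^{\mA^*}$ then $t(a)=t^i$, and tiling condition~ii) gives $(t(a))_1 = (t(b))_3$, so writing $t(b)=t^j$ we get $t^i_1 = t^j_3$, i.e.\ $t^j\in\cmd{right}(t^i)$, hence $b\in P_j^{\mA^*}$ and the corresponding disjunct of $\psi_T$ holds; the vertical case using $V$ and condition~i) together with $\cmd{top}$ is symmetric.

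For the converse, suppose $\mA^* = (A,V,H,P_0,\dots,P_k)$ is an expansion with $\mA^*\models\phi_T$. From $\mA^*\models\theta_T$, every $a\in A$ lies in exactly one $P_i^{\mA^*}$ (the disjunction gives at least one, and the conjunction of negations inside the chosen disjunct forbids membership in any other — one should note the disjuncts are pairwise exclusive so ``exactly one'' genuinely follows), which lets me define $t\colon A\to T$ by $t(a) := t^i$ where $i$ is the unique index with $a\in P_i^{\mA^*}$. Then $\mA^*\models\psi_T$ yields the two tiling conditions: if $(a,b)\in H$ and $t(a)=t^i$, the first conjunct of $\psi_T$ forces $b\in P_j^{\mA^*}$ for some $t^j\in\cmd{right}(t^i)$, and by uniqueness $t(b)=t^j$ with $t^i_1 = t^j_3$, which is exactly condition~ii); the vertical case is analogous. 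Hence $t$ is a $T$-tiling of $\mA$.

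This lemma is essentially a bookkeeping exercise, so I do not expect a genuine obstacle; the only points requiring a little care are (a) invoking Proposition~\ref{FO extension} to pass between $\models$ and classical $\models_{\fo}$ so that the formula abbreviation $\phi\rightarrow\psi := \phi^\neg\vee\psi$ behaves as the ordinary material conditional, and (b) making sure $\theta_T$ really pins down a \emph{function} $A\to T$ (totality from the disjunction, well-definedness from the mutual exclusion of the disjuncts), since the rest of the argument relies on $t$ being single-valued.
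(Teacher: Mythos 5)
Your proof is correct and is exactly the routine semantic unfolding the authors had in mind: the paper states Lemma~\ref{tiling iff tiling formula} without proof, treating it as immediate from the construction of $\phi_T$. Your two points of care --- passing to classical first-order semantics via Proposition~\ref{FO extension} and checking that $\theta_T$ pins down a total single-valued function $A\to T$ --- are precisely the (minor) details worth recording.
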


\begin{lemma}\label{torustiling iff tiling formula}
Let $T=\{t_0,\dots,t_k\}$ be a set of tiles and $\mB=(B,V,H,V',H')$ a structure. There is an \fotwo sentence $\gamma_T$ of the vocabulary $\{V,H,V',H',P_0,\dots,P_k\}$ such that $\mB$ is $T$-tilable iff there is an expansion $\mB^*=(A,V,H,V',H',P_0,\dots,P_k)$ of $\mB$ such that $\mB^*\models\gamma_T$.
\end{lemma}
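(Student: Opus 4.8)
The plan is to mirror the construction of $\phi_T$ from Definition~\ref{def:tiling formula}, adjusting it to account for the two additional ``wrap-around'' successor relations $V'$ and $H'$. Recall that by definition $\mB=(B,V,H,V',H')$ is $T$-tilable iff the structure $(B,V\cup V',H\cup H')$ is $T$-tilable, so the only thing we need is a formula that correctly expresses the tiling constraints along \emph{all four} relations. Since the color-matching constraints are the same regardless of which ``horizontal'' or ``vertical'' edge we traverse, I would simply take $\psi_T$ and conjoin analogous implications for $V'$ and $H'$.

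Concretely, I would define
\[\begin{array}{r@{}l}
\psi'_T:=\forall x \forall y \bigg(\Big(&(H(x,y)\vee H'(x,y))\rightarrow \bigwedge\limits_{i\leq k}\big(P_i(x)\rightarrow \bigvee\limits_{t^j\in \cmd{right}(t^i)} P_j(y)\big)\Big)\ \wedge\\
\Big(&(V(x,y)\vee V'(x,y))\rightarrow \bigwedge\limits_{i\leq k}\big(P_i(x)\rightarrow \bigvee\limits_{t^j\in \cmd{top}(t^i)} P_j(y)\big)\Big)\bigg),
\end{array}\]
keeping $\theta_T$ exactly as before (the ``exactly one color class per point'' constraint does not depend on the successor relations), and setting $\gamma_T:=\psi'_T\wedge\theta_T$. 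This is clearly an \fotwo sentence over the vocabulary $\{V,H,V',H',P_0,\dots,P_k\}$ since it uses only the variables $x$ and $y$ and no function or constant symbols.

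For the correctness argument I would essentially invoke Lemma~\ref{tiling iff tiling formula} applied to the structure $(B,V\cup V',H\cup H')$: an expansion of $\mB$ by unary predicates $P_0,\dots,P_k$ satisfies $\gamma_T$ if and only if the corresponding expansion of $(B,V\cup V',H\cup H')$ satisfies $\phi_T$ (because $\psi'_T$ is, up to logical equivalence, the formula $\psi_T$ with $H$ replaced by $H\cup H'$ and $V$ by $V\cup V'$, using that $(H(x,y)\vee H'(x,y))\rightarrow\chi$ is equivalent over classical \fotwo to $(H(x,y)\rightarrow\chi)\wedge(H'(x,y)\rightarrow\chi)$, and $\theta_T$ is untouched). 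Then Lemma~\ref{tiling iff tiling formula} tells us this holds for some expansion iff $(B,V\cup V',H\cup H')$ is $T$-tilable, which by definition is exactly $\mB$ being $T$-tilable.

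I do not anticipate a serious obstacle here; the lemma is a routine adaptation. The only minor subtlety worth a sentence in the write-up is the remark that $\to$ is being used in its classical \fo sense inside an \fotwo formula (so the distribution of the antecedent disjunction over the implication is the ordinary Boolean one), and that the unary predicates range over all of $B$ while $\theta_T$ forces $t$ to be a total well-defined function $B\to T$ — exactly as in the proof of Lemma~\ref{tiling iff tiling formula}. Since we are permitted to quote that lemma, the cleanest presentation is to reduce directly to it rather than repeating the induction.
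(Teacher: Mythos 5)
Your construction is correct and is exactly the routine adaptation the paper has in mind: the paper in fact states this lemma without proof, and the intended $\gamma_T$ is precisely $\psi_T$ with the antecedents $H(x,y)$ and $V(x,y)$ relaxed to $H(x,y)\vee H'(x,y)$ and $V(x,y)\vee V'(x,y)$, conjoined with the unchanged $\theta_T$, which by the definition of $T$-tilability of $(B,V,H,V',H')$ via $(B,V\cup V',H\cup H')$ reduces everything to Lemma~\ref{tiling iff tiling formula}. No gap.
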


Notice that $\phi_T$ is an $\fotwo$-sentence. Therefore $T$-tiling is expressible even in $\fotwo$.
The difficulty lies in expressing that a structure is (or at least contains) a grid. This is the part of the construction where \fotwo or even \dtwo formulas are no longer sufficient and the full expressivity of \iftwo is needed.

\begin{definition}\label{def:gridlike}
A structure $\mA=(A,V,H)$ is called \emph{grid-like} iff
it satisfies the conjunction $\phi_\mathrm{grid}$ of the formulas
\[\begin{array}{lcl} 
\phi_{\mathrm{functional}}(R)            & :=    & \forall x\forall y \big(R(y,x)\,\to\, \exists y/\{x\}\,x=y \big)\\
                                &       & \quad\text{for $R\in\{V,H\}$},\\
\phi_{\mathrm{injective}}(R)             & :=    & \forall x\forall y \big(R(x,y)\, \to\, \exists y/\{x\} \,x=y \big)\\
                                &       & \quad\text{for $R\in\{V,H\}$},\\
\phi_{\mathrm{root}}                     & :=    & \exists x\forall y \big(\neg V(y,x)\wedge \neg H(y,x)\big),\\
\phi_{\mathrm{distinct}}                 & :=    & \forall x\forall y\, \neg \big(V(x,y)\wedge H(x,y)\big),\\
\phi_{\mathrm{edge}}(R,R')               & :=    & \forall x \Big(\big(\forall y \,\neg R(y,x)\big)\,\to \forall y \big(R'(x,y)\to \forall x\,\neg R(x,y)\big)\Big)\\
                                &       & \quad\text{for $(R,R')\in\{(V,H),(H,V)\}$,}\\
\phi_{\mathrm{join}}                     & :=    & \forall x \forall y \Big(\big(V(x,y)\vee H(x,y)\big)\,\to \exists x/\{y\}\, \big(V(y,x)\vee H(y,x)\big)\Big),\\
\phi_{\mathrm{infinite}}(R)              & :=    & \forall x\exists y R(x,y) \text{ for $R\in\{V,H\}$}.
\end{array}\]
\end{definition}

The grid-likeness of a structure can alternatively be described in the following more intuitive way.

\begin{remark}\label{gridlike description}
A structure $\mA=(A,V,H)$ is grid-like iff
\begin{enumerate}[i)]
\item\label{degree} $V$ and $H$ are (graphs of) injective total functions, i.e.,~the out-degree of every element is exactly one and the in-degree at most one ($\phi_{\mathrm{infinite}}$, $\phi_{\mathrm{functional}}$ and $\phi_{\mathrm{injective}}$),
\item there is an element, called the \emph{root}, that does not have any predecessors ($\phi_{\mathrm{root}}$),
\item for every element, its $V$ successor is distinct from its $H$ successor ($\phi_{\mathrm{distinct}}$),
\item for every element $x$ such that $x$ does not have a $V$ (resp.~$H$) predecessor, the $H$ (resp.~$V$) successor of $x$ also does not have a $V$ (resp.~$H$) predecessor ($\phi_{\mathrm{edge}}$),
\item\label{join description} for every element $x$ there is an element $y$ such that $(x,y) \in (V\circ H)\cap (H\circ V)$ or $(x,y) \in (V\circ V)\cap (H\circ H)$,
\end{enumerate}
\end{remark}
\begin{proof}
We show that a structure $\mB \models \phi_{\mathrm{grid}}$ satisfies the above five properties.
The only difficult case is property \ref{join description}). First note that $\phi_\mathrm{join}$ is equivalent to the first-order formula
\[\forall x \exists x' \forall y \Big(\big(V(x,y)\vee H(x,y)\big)\,\to \, \big(V(y,x')\vee H(y,x')\big)\Big).\]
Since $\phi_{\mathrm{functional}}$, $\phi_{\mathrm{distinct}}$ and $\phi_{\mathrm{infinite}}$ hold as well, $\mB$ satisfies 
\[
\forall x \exists x' \exists y_1 \exists y_2 \Big(y_1\neq y_2 \wedge V(x,y_1)\wedge H(x,y_2) \wedge \big(V(y_1,x')\vee H(y_1,x')\big) \wedge \big(V(y_2,x')\vee H(y_2,x')\big)\Big).
\]
Due to $\phi_{\mathrm{injective}}$, neither $V(y_1,x') \wedge V(y_2,x')$ nor $H(y_1,x') \wedge H(y_2,x')$ can be true if $y_1\neq y_2$. Hence, $\mB$ satisfies
\[\begin{array}{l}
\forall x \exists x' \exists y_1 \exists y_2 \Big(y_1\neq y_2 \wedge \Big(\big(V(x,y_1)\wedge H(x,y_2) \wedge V(y_1,x') \wedge H(y_2,x')\big) \vee\\
\quad \big(V(x,y_1)\wedge H(x,y_2) \wedge H(y_1,x') \wedge V(y_2,x')\big)\Big)\Big).
\end{array}\]
From this formula the property \ref{join description}) is immediate (with $x:=x$ and $y:=x'$).
\end{proof}

Now we will use Remark~\ref{gridlike description} to show that a grid-like structure, although it need not be the grid itself, must at least contain an isomorphic copy of the grid as a substructure.

\begin{theorem}\label{gridlike includes grid}
Let $\mA=(A,V,H)$ be a grid-like structure. Then $\mA$ contains an isomorphic copy of $\mG$ as a substructure.
\end{theorem}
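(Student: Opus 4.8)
The plan is to exhibit an embedding $g\colon\N\times\N\to A$ and show that the substructure of $\mA$ induced on its image is isomorphic to $\mG$. I use the reformulation in Remark~\ref{gridlike description}: $V$ and $H$ are graphs of total injective functions (I write $V(x)$, $H(x)$ for the successors, so in-degrees are at most one); there is a root $r$ without predecessors; $V(x)\neq H(x)$ for all $x$; the set $B$ of elements with no $V$-predecessor is closed under $H$, and the set $L$ of elements with no $H$-predecessor is closed under $V$; and, by the last item of that remark, for every $x$ either $H(V(x))=V(H(x))$ (which I call \emph{confluence at $x$}) or else $V(V(x))=H(H(x))$.

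The intended definition is $g(0,0)=r$, $g(i,0)=H(g(i-1,0))$, $g(0,j)=V(g(0,j-1))$ and $g(i,j)=H(g(i-1,j))$ for $i,j\geq 1$, carrying along the assertions $V(g(i,j))=g(i,j+1)$ and $H(g(i,j))=g(i+1,j)$. For $i,j\geq 1$ this is consistent only if $H(g(i-1,j))=V(g(i,j-1))$, and since $g(i-1,j)=V(g(i-1,j-1))$ and $g(i,j-1)=H(g(i-1,j-1))$ by the earlier clauses, this equation is exactly confluence at the diagonal predecessor $g(i-1,j-1)$. So the construction rests on establishing confluence at every point of the image, and this is the main obstacle: the dichotomy above is only a weak surrogate, and ruling out the alternative $V(V(x))=H(H(x))$ takes genuine work.

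I would prove everything simultaneously by a nested induction, after two base observations. First, since $r\in B\cap L$ and $B$, $L$ are closed under $H$, $V$ respectively, each $g(i,0)=H^{i}(r)$ lies in $B$ and each $g(0,j)=V^{j}(r)$ lies in $L$. Second, confluence holds at every $x\in B$: otherwise $V(V(x))=H(H(x))$, but $H(H(x))\in B$ has no $V$-predecessor while $V(x)$ is one; symmetrically confluence holds at every $x\in L$. The outer induction is on the antidiagonal index $n$, asserting that $g$ is well defined and injective on $\{(i,j):i+j\le n\}$ and that confluence holds at each such $g(i,j)$. Extending $g$ to the antidiagonal $i+j=n+1$ needs only confluence two antidiagonals back, which is available; injectivity then follows from a predecessor count, since the unique $V$-predecessor of $g(i,j)$ is $g(i,j-1)$ while $g(i,0)$ has none, so the backward $V$-chain from $g(i,j)$ terminates after exactly $j$ steps, and symmetrically the backward $H$-chain terminates after exactly $i$ steps, whence $g(i,j)$ determines $(i,j)$. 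Propagating confluence to the new antidiagonal is the delicate step, and I handle it by an inner induction on the first coordinate $i$. At the endpoints confluence holds because $g(0,n+1)\in L$ and $g(n+1,0)\in B$. For $1\le i\le n$, set $j=n+1-i$, $a=g(i-1,j)$, $b=g(i,j-1)$ (both lie on antidiagonal $n$, so the outer hypothesis gives confluence at $a$ and at $b$) and $z=g(i,j)=H(a)=V(b)$, and suppose confluence fails at $z$; the dichotomy then forces $V(V(z))=H(H(z))=:y$. Confluence at $a$ gives $H(V(a))=V(H(a))=V(z)$; applying confluence at $V(a)=g(i-1,j+1)$ (the inner hypothesis, since $i-1<i$) gives $H(V(V(a)))=V(H(V(a)))=V(V(z))=y=H(H(z))$, so $V(V(a))=H(z)$ by injectivity of $H$; but confluence at $b$ gives $H(z)=V(H(b))$, whence $V(a)=H(b)$ by injectivity of $V$, i.e., $g(i-1,j+1)=g(i+1,j-1)$, contradicting injectivity of $g$ on this antidiagonal. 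So confluence holds at $z$, and both inductions close.

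It remains to note that $g$ is an embedding in the strong sense required: $g$ is injective, and since $g(i,j)$ has $V(g(i,j))=g(i,j+1)$ as its only $V$-successor and $H(g(i,j))=g(i+1,j)$ as its only $H$-successor, for image elements $(g(i,j),g(i',j'))\in V$ holds iff $(i',j')=(i,j+1)$, and $(g(i,j),g(i',j'))\in H$ holds iff $(i',j')=(i+1,j)$. Thus $g$ is an isomorphism from $\mG$ onto the substructure of $\mA$ induced on $g[\N\times\N]$.
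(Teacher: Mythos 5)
Your proof is correct. It shares the overall skeleton of the paper's argument---both build the copy of $\mG$ outward from the root one antidiagonal at a time, use $\phi_{\mathrm{root}}$ and $\phi_{\mathrm{edge}}$ to control the west and south borders, and use property v) of Remark~\ref{gridlike description} together with injectivity of $V$ and $H$ to place interior points---but the two proofs organize the crucial step differently. The paper's propagated invariant is \emph{in-degree completeness} of the partial image (every placed element has all of its $\mA$-predecessors already placed); this single invariant is used both to guarantee that newly added points are new and to rule out the bad disjunct $V(V(x))=H(H(x))$ of the join dichotomy, by observing that the relevant $V^2$-successor lies in the range while the $H^2$-successor cannot. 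You instead make \emph{confluence} $H(V(x))=V(H(x))$ the explicitly propagated invariant, obtain injectivity separately by counting the lengths of the backward $V$- and $H$-chains (a cleaner device than the paper's bookkeeping), and exclude the bad disjunct by the computation $V(V(a))=H(z)=V(H(b))$, hence $V(a)=H(b)$, contradicting injectivity on the new antidiagonal. The price is the extra inner induction on the first coordinate, needed because confluence at $g(i,j)$ invokes confluence at $g(i-1,j+1)$ on the same antidiagonal; the gain is that you never need to track in-degree completeness, and the role of the join axiom is isolated more transparently. Both arguments are complete and yield the same conclusion.
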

\begin{proof}
If $\mB$ is a model with two binary relations $R$ and $R'$, $b\in B$ and $i\in \N$ then the \emph{$i$-$b$-generated substructure} of $\mB$ (denoted by $\mB^i(b)$) is defined inductively in the following way:
\[\begin{array}{lcl}
\mB^0(b)        &=& \mB\upharpoonright\{b\},\\
\mB^{i+1}(b)    &=& \mB\upharpoonright\big(B^i(b)\,\cup \{x\in B\mid \exists y\in B^i(b):(y,x)\in R\cup R'\}\big).
\end{array}\]

Let $r\in A$ be a root of $\mA$ (which exists because $\mA\models \phi_{\mathrm{root}}$). We call a point $a\in\mA$ a \emph{west border point} (resp.~\emph{south border point}) if $(r,a)\in V^n$ (resp.~$(r,a)\in H^n$) for some $n\in\N$. 
Due to Remark~\ref{gridlike description}, every point in $\mA$ has $V$- and $H$-in-degree at most one while the west border points have $H$-in-degree zero and the south border points have $V$-in-degree zero. We call a substructure $\mH$ of $\mA$ \emph{in-degree complete} if every point in $\mH$ has the same in-degrees in $\mH$ as it has in $\mA$.

We will prove by induction that there exists a family of isomorphisms $\{f_i\mid i\in\N\}$ such that
\begin{enumerate}
\item $f_i$ is an isomorphism from $\mG^i((0,0))$ to $\mA^i(r)$,
\item $\mA^i(r)$ is in-degree complete and
\item $f_{i-1}\subseteq f_{i}$
\end{enumerate}
for all $i\in\N$.

The basis of the induction is trivial. Clearly the function $f_0$ defined by $f_0((0,0)):=r$ is an isomorphism from $\mG^0((0,0))$ to $\mA^0(r)$. And since $r$ is a root it has no $V$- or $H$-predecessors. Hence, $\mA^0(r)$ is in-degree complete. 

\begin{figure}
\centering
\includegraphics{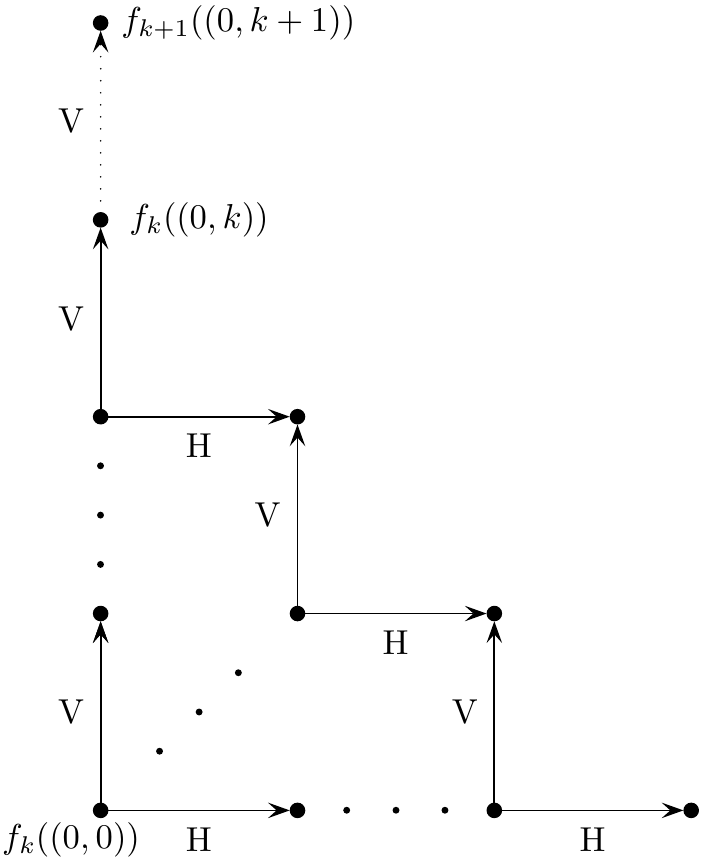}
\caption{The inductively defined substructures}
\label{fig:submodel}
\end{figure}

Let us then assume that $f_k$ is an isomorphism from $\mG^k((0,0))$ to $\mA^k(r)$, $\mA^k(r)$ is in-degree complete and $f_{k-1}\subseteq f_k$.
Then the $k$-$r$-generated substructure of $\mA^{k+1}(r)$ (which is $\mA^k(r)$) is isomorphic to $\mG^k((0,0))$ and the isomorphism is given by $f_k$.

We will now show how to extend $f_k$ to the isomorphism $f_{k+1}$. This is done by extending $f_k$ element by element along the diagonal (Figure~\ref{fig:submodel} shows the first extension step). We will abuse notation and denote the extensions of the function $f_k$ by $h$ throughout the proof. We will show by induction on $j$ that we can extend the isomorphism by assigning values for $h(j,(k+1)-j)$ for all $0\leq j\leq k+1$ -- still maintaining the isomorphism between $\mG\upharpoonright\dom(h)$ and $\mA\upharpoonright \rng[h]$, and the in-degree completeness of $\mA\upharpoonright \rng[h]$.

Due to $\phi_{\mathrm{infinite}}$ and $\phi_{\mathrm{functional}}$ the west border point $f_{k}((0,k))$ has a unique $V$-successor $a$. Since the $k$-$r$-generated substructure of $\mA^{k+1}(r)$ is isomorphic to $\mG^k((0,0))$ and $(0,k)$ has no $V$ successor in $G^k((0,0))$ we know that $f_k(y)\neq a$ for every $y\in G^k((0,0))$. Note that due to $\phi_{\mathrm{edge}}$ and since $f_k((0,k))$ is a west border point and has no $H$-predecessors in $\mA$, $a$ is also a west border point and has no $H$-predecessor in $\mA$. Thus $\mA\upharpoonright (\rng[h]\cup\{a\})$ is in-degree complete. Since $\mA\upharpoonright \rng[h]$ is in-degree complete, $a$ has no $V\cup H$-successors in $\rng[h]$. Due to $\phi_\mathrm{edge}$ and $\phi_\mathrm{injective}$, $a$ has no reflexive loops. We extend $h$ by $h((0,k+1)):=a$. Clearly the extended function $h$ is an isomorphism and $\mA\upharpoonright \rng[h]$ in-degree complete.


Now let $m\in\{0,\dots,k-1\}$ and assume that $h((j,(k+1)-j))$ is defined for all $j \leq m$, $h$ is an isomorphism extending $f_k$ and $h(G)$ is in-degree complete. We will prove that we can extend $h$ by assigning a value for $h(m+1,(k+1)-(m+1))$, still maintaining the required properties. By the induction hypothesis we have defined a value for $h((m,(k+1)-m))$.
Now $h((m,(k+1)-m))$ is the $V^2$-successor of $h((m,(k-1)-m))$. Since $h((m,(k-1)-m))$ has no $H^2$ successor in the structure $\mA\upharpoonright \rng[h]$, the $H^2$- and $V^2$-successors of $h((m,(k-1)-m))$ in $\mA$ cannot be the same point.
Now by Remark~\ref{gridlike description}\ref{join description}, this implies that there is a point $c\in A\setminus \rng[h]$ such that $c$ is the $H\circ V$- and $V\circ H$-successor of $h((m,(k-1)-m))$ in $\mA$.
We extend $h$ by $h((m+1,k-m)):=c$ and observe that $\mA\upharpoonright (\rng[h]\cup \{c\})$ is still in-degree complete. By $\phi_\mathrm{injective}$ and in-degree completeness of $\mA\upharpoonright(\rng[h]\setminus\{c\})$, the extended function $h$ is an isomorphism.

Finally we extend the south border. This is possible by reasoning similar to the case where we extended the west border. 

Let $f_{k+1}$ be the isomorphism from $\mG^{k+1}((0,0))$ to $\mA^{k+1}(r)$ that exists by the inductive proof. Clearly $\mA^{k+1}(r)$ is in-degree complete and $f_k\subseteq f_{k+1}$. Now since the isomorphisms $f_i$ for $i\in\N$ constitute an ascending chain, $\bigcup_{i\in\N}f_i$ is an isomorphism from $\mG$ to a substructure of $\mA$. Therefore $\mA$ has an isomorphic copy of the grid as a substructure.
\end{proof}

The last tool needed to prove the main theorem is the following trivial lemma.

\begin{lemma}\label{tiling supergrids}
Let $T$ be a set of tiles and $\mB=(B,V,H)$ a structure. Then $\mB$ is $T$-tilable iff there is a structure $\mA$ which is $T$-tilable and contains a substructure that is isomorphic to $\mB$.
\end{lemma}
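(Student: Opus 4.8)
The statement is an ``iff'' and both directions are essentially immediate from the definitions, so the plan is just to unwind Definition~\ref{def:tiling} (of a $T$-tiling) and the notion of substructure. For the forward direction, suppose $\mB=(B,V,H)$ is $T$-tilable. Then I simply take $\mA:=\mB$, which trivially contains $\mB$ as a substructure (via the identity), and $\mA$ is $T$-tilable by assumption. (One could equally take any $T$-tilable superstructure, but the trivial choice suffices.)

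For the converse, suppose there is a $T$-tilable structure $\mA=(A,V^\mA,H^\mA)$ together with an isomorphic embedding $g\colon B\to A$ of $\mB=(B,V^\mB,H^\mB)$ onto a substructure $g[\mB]$ of $\mA$. Let $t\colon A\to T$ be a $T$-tiling of $\mA$. Define $t'\colon B\to T$ by $t'(b):=t(g(b))$. I then check the two tiling conditions from Definition~\ref{def:tiling}: if $(b,b')\in V^\mB$, then since $g$ is an embedding $(g(b),g(b'))\in V^\mA$, so $(t(g(b)))_0=(t(g(b')))_2$ by condition~i) for $t$, i.e.\ $(t'(b))_0=(t'(b'))_2$; the horizontal case is identical using condition~ii) and $H$. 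Hence $t'$ is a $T$-tiling of $\mB$, so $\mB$ is $T$-tilable.

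There is no real obstacle here — this is the ``trivial lemma'' the text advertises. The only point worth stating carefully is that a $T$-tiling restricts along an embedding: a substructure inherits the tiling by composing with the inclusion (or the isomorphism), because the matching conditions are universally quantified implications whose hypotheses $(x,y)\in V$, $(x,y)\in H$ are \emph{preserved} downwards by the substructure relation (they hold in the substructure only if they hold in the big structure). I would present the two directions in the order above, with the converse being the one that needs the short verification.
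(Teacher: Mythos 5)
Your proof is correct and is exactly the intended argument: the paper labels this lemma ``trivial'' and omits the proof entirely, and the two directions you give (take $\mA:=\mB$ for the forward direction; pull back the tiling along the embedding and use that the substructure's relations are contained in the superstructure's for the converse) are the only content there is.
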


The following is the main theorem of this section.

\begin{theorem}\label{iftwo sat complexity}
\sat[\iftwo] is \pizeroone-complete.
\end{theorem}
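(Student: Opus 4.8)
The plan is to handle the two bounds separately: membership in $\pizeroone$ via a translation into $\eso$, and $\pizeroone$-hardness via a computable reduction of $\tiling[\mG]$ built around the sentence $\phi_T\wedge\phi_{\mathrm{grid}}$.

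For the upper bound, given $\phi\in\iftwo$ one applies the effective translations witnessing $\iftwo\le\ifl\equiv\eso$ (Theorem~\ref{d equiv if equiv eso}) to compute an $\eso$-sentence equivalent, hence equisatisfiable, to $\phi$; writing it in prenex form $\exists R_1\cdots\exists R_n\exists f_1\cdots\exists f_m\,\chi$ with $\chi\in\fo$, Remark~\ref{eso sat} gives that it is satisfiable iff $\chi$ is. Since $\fo$-satisfiability is $\pizeroone$ (the complement of provability of the negated sentence, \cite{ch36,tu36}) and the whole chain of transformations is computable, $\sat[\iftwo]\in\pizeroone$. Equivalently, one simply observes that $\sat[\ifl]\in\pizeroone$ (Table~\ref{table:results}) and that $\iftwo$ is a decidable syntactic fragment of $\ifl$.

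For the lower bound, the key claim is that $T\mapsto\phi_T\wedge\phi_{\mathrm{grid}}$ is a computable reduction of $\tiling[\mG]$ to $\sat[\iftwo]$. First note that $\phi_T$ is an $\fotwo$-sentence and $\phi_{\mathrm{grid}}$ an $\iftwo$-sentence, so their conjunction is an $\iftwo$-sentence, conjunction of sentences behaves classically, and by Proposition~\ref{FO extension} the $\iftwo$-reading of $\phi_T$ agrees with its ordinary one. Then prove the two implications. If $\mG$ is $T$-tilable: check directly from Definitions~\ref{def:grid} and~\ref{def:gridlike} (using Remark~\ref{gridlike description} for the first-order reading of $\phi_{\mathrm{join}}$) that $\mG$ is grid-like; by Lemma~\ref{tiling iff tiling formula} some expansion $\mG^*$ of $\mG$ models $\phi_T$, and since $\phi_{\mathrm{grid}}$ mentions only $V$ and $H$ we get $\mG^*\models\phi_T\wedge\phi_{\mathrm{grid}}$, so the sentence is satisfiable. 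Conversely, if $\mA^*\models\phi_T\wedge\phi_{\mathrm{grid}}$, its $\{V,H\}$-reduct $\mA$ is grid-like, so by Theorem~\ref{gridlike includes grid} it contains an isomorphic copy of $\mG$ as a substructure; by Lemma~\ref{tiling iff tiling formula} $\mA$ is $T$-tilable, and Lemma~\ref{tiling supergrids} with $\mB:=\mG$ then forces $\mG$ itself to be $T$-tilable, i.e.\ $T\in\tiling[\mG]$. Combining this reduction with $\pizeroone$-completeness of $\tiling[\mG]$ (Theorem~\ref{tiling pizeroone}) and the membership proved above gives $\pizeroone$-completeness.

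The only genuinely new computation is verifying $\mG\models\phi_{\mathrm{grid}}$, which is routine but should be carried out clause by clause; the mildly delicate clauses are $\phi_{\mathrm{functional}}$ and $\phi_{\mathrm{injective}}$, where the slashed quantifier is what enforces out- and in-degree at most one, and $\phi_{\mathrm{join}}$, which is witnessed by the diagonal successor. I would also stress the conceptual point: we conjoin with $\phi_{\mathrm{grid}}$ rather than a sentence pinning $\mG$ down up to isomorphism because $\iftwo$ cannot do the latter, so we settle for models that merely \emph{contain} a grid, and Lemma~\ref{tiling supergrids} is exactly what makes this harmless by pushing tilability from the superstructure down to $\mG$. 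Since the hard work is already packaged in Theorem~\ref{gridlike includes grid}, no serious obstacle remains in the theorem itself.
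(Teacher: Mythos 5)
Your proposal is correct and follows essentially the same route as the paper: the identical reduction $T\mapsto\phi_{\mathrm{grid}}\wedge\phi_T$ for hardness, using Lemma~\ref{tiling iff tiling formula}, Theorem~\ref{gridlike includes grid} and Lemma~\ref{tiling supergrids} exactly as the paper does, and membership via an effective translation into \eso{} followed by Remark~\ref{eso sat} and G\"odel completeness. The only cosmetic difference is that the paper routes the upper bound through the explicit computable translations $\iftwo\le\df^3\le\eso$ (Lemma~\ref{iftod} and \cite[Theorem~6.2]{va07}) rather than invoking $\ifl\equiv\eso$ directly, precisely to make effectiveness manifest.
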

\begin{proof}
For the upper bound note that $\sat[\fo]\in\pizeroone$ by G\"odel's completeness theorem. By Remark~\ref{eso sat} it follows that $\sat[\eso]\in\pizeroone$ and by the computable translation from $\df$ into $\eso$ from \cite[Theorem~6.2]{va07}, it follows that $\sat[\df^3] \in \pizeroone$. Finally, the computability of the reductions in Lemma~\ref{iftod} and Theorem ~\ref{dtwo le if2} implies $\sat[\iftwo]\in \pizeroone$.

The lower bound follows by the reduction $g$ from \tiling[\mG] to our problem defined by $g(T):=\phi_{\mathrm{grid}} \wedge \phi_T$.
To see that $g$ indeed is such a reduction, first let $T$ be a set of tiles such that $\mG$ is $T$-tilable. Then, by Lemma~\ref{tiling iff tiling formula}, it follows that there is an expansion $\mG^*$ of $\mG$ such that $\mG^*\models \phi_T$. Clearly, $\mG^*\models \phi_{\mathrm{grid}}$ and therefore $\mG^* \models \phi_{\mathrm{grid}}\wedge \phi_T$.
If, on the other hand, $\mA^*$ is a structure such that $\mA^*\models \phi_{\mathrm{grid}}\wedge \phi_T$, then by Theorem~\ref{gridlike includes grid}, the $\{V,H\}$-reduct $\mA$ of $\mA^*$ contains an isomorphic copy of $\mG$ as a substructure. Furthermore, by Lemma~\ref{tiling iff tiling formula}, $\mA$ is $T$-tilable. Hence, by Lemma~\ref{tiling supergrids}, $\mG$ is $T$-tilable.
\end{proof}

\subsection{\texorpdfstring{Finite satisfiability for \iftwo is undecidable}{Finite satisfiability for IF\textasciicircum2 is undecidable}}\label{iffinsatsection}
We will now discuss the problem \finsat[\iftwo] whose undecidability proof is similar to the above, the main difference being that it uses tilings of tori instead of tilings of the grid.
\begin{definition}\label{toruslike description}

A finite structure $\mA=(A,V,H,V',H')$ is \emph{torus-like} iff it satisfies the following two conditions
\begin{enumerate}[i)]
\item there exist unique and distinct points $SW$, $NW$, $NE$, $SE$ such that
\begin{enumerate}
\item $SW$ has no $V$- and no  $H$-predecessor,
\item $NW$ has no $H$-predecessor and no $V$-successor,
\item $NE$ has no $V$- and no $H$-successor and
\item $SE$ has no $H$-successor and no $V$-predecessor,
\end{enumerate}
\item there exist $m,n\in\N$ such that
\begin{enumerate}
\item $(A,V,H)$ is a model that has an isomorphic copy of the $m\times n$ grid as a component with $SW$, $NW$, $NE$ and $SE$ as corner points,
\item $(A,V',H)$ is a model that has an isomorphic copy of the $m\times 2$ grid as a component with $NW$, $SW$, $SE$ and $NE$ as corner points and $(NW,SW),(NE,SE)\in V'$,
\item $(A,V,H')$ is a model that has an isomorphic copy of the $2\times n$ grid as a component with $SE$, $NE$, $NW$ and $SW$ as corner points and $(SE,SW),(NE,NW)\in H'$.
\end{enumerate}
By a \emph{component} of $\mA=(A,V,H)$ we mean a maximal weakly connected substructure $\mM$, i.e., any two points in $M$ are connected by a path along $R:=V\cup H\cup V^{-1}\cup H^{-1}$, and furthermore, for all $M'$ such that $M\subset M' \subseteq A$, there exist two points in $\mA\upharpoonright M'$ that are not connected by $R$.
\end{enumerate}
\end{definition}

In order to define torus-likeness of a structure with an \iftwo formula we first need to express that a finite structure has a finite grid as a component. This is done in essentially the same way as expressing that a structure has a copy of the infinite grid as a substructure.

\begin{definition}\label{def:fingridlike}
A finite structure $\mA=(A,V,H)$ is called \emph{fingrid-like} iff
it satisfies the conjunction $\phi_\mathrm{fingrid}$ of the formulas
\[\begin{array}{l@{\ }c@{\ }l}
\phi_{\mathrm{SWroot}}           & :=    &\exists x\forall y \big(\neg V(y,x)\wedge \neg H(y,x) \wedge \exists y V(x,y)\wedge\exists y H(x,y)\big),\\ 
\phi_{\mathrm{functional}}(R)            & :=    & \forall x\forall y \big(R(y,x)\,\to\, \exists y/\{x\}\,x=y \big)\\
                                &       & \quad\text{for $R\in\{V,H\}$},\\
\phi_{\mathrm{injective}}(R)             & :=    & \forall x\forall y \big(R(x,y)\, \to\, \exists y/\{x\} \,x=y \big)\\
                                &       & \quad\text{for $R\in\{V,H\}$},\\
\phi_{\mathrm{distinct}}                 & :=    & \forall x\forall y\, \neg \big(V(x,y)\wedge H(x,y)\big),\\
\phi_{\mathrm{SWedge}}        & :=    & \forall x \Big(\big(\forall y \,\neg R(y,x)\big)\to\forall y \big((R'(x,y)\vee R'(y,x)) \to \forall x\,\neg R(x,y)\big)\Big)\\
                                &       & \quad\text{for $(R,R')\in\{(V,H),(H,V)\}$,}\\
\phi_{\mathrm{NEedge}}        & :=    & \forall x \Big(\big(\forall y \,\neg R(x,y)\big)\to\forall y \big((R'(x,y)\vee R'(y,x))\to \forall x\,\neg R(y,x)\big)\Big)\\
                                &       & \quad\text{for $(R,R')\in\{(V,H),(H,V)\}$,}\\
\phi_{\mathrm{finjoin}}          & :=    & \forall x \Big(\forall y \neg V(x,y)\vee \forall y \neg H(x,y)\vee \forall y \Big(\big(V(x,y)\vee H(x,y)\big)\\
                                &       & \quad\to \exists x/\{y\}\, \big(V(y,x)\vee H(y,x)\big)\Big)\Big),
\end{array}\]
\end{definition}

Fingrid-likeness can also be described in the following intuitive way.
\begin{remark}\label{fingridlike description}
A structure $\mA=(A,V,H)$ is fingrid-like iff
\begin{enumerate}[i)]
\item $V$ and $H$ are (graphs of) injective partial functions, i.e.,~the in- and out-degree of every element is at most one ($\phi_{\mathrm{functional}}$ and $\phi_{\mathrm{injective}}$),
\item there exists a point, denoted by $SW$, that has a $V$-successor and an $H$-successor but does not have $V\cup H$-predecessors,
($\phi_{\mathrm{SWroot}}$),
\item for every element, its $V$-successor is distinct from its $H$-successor ($\phi_{\mathrm{distinct}}$),
\item for every element $x$ such that $x$ does not have a $V$ (resp.~$H$) predecessor, the $H$ (resp.~$V$) successor and predecessor of $x$ also do not have a $V$ (resp.~$H$) predecessor ($\phi_{\mathrm{SWedge}}$),
\item for every element $x$ such that $x$ does not have a $V$ (resp.~$H$) successor, the $H$ (resp.~$V$) successor and predecessor of $x$ also do not have a $V$ (resp.~$H$) successor ($\phi_{\mathrm{NEedge}}$),
\item for every element $x$  that has a $V$-successor and an $H$-successor there is an element $y$ such that $(x,y) \in (V\circ H)\cap (H\circ V)$ or $(x,y) \in (V\circ V)\cap (H\circ H)$.
\end{enumerate}
\end{remark}

Notice that for a grid $\mG$ to be grid-like, it is required that the grid is not of the type $1\times n$ or $n\times 1$ for any $n\in\N$.
A grid that is grid-like is called a \emph{proper} grid. Now we can show that a fingrid-like structure contains a proper finite grid as a component.
\begin{lemma}\label{fingridlike includes grid}
Let $\mA=(A,V,H)$ be a finite fingrid-like structure. Then $\mA$ contains an isomorphic copy of a proper finite grid as a component.
\end{lemma}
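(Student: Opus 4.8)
The plan is to follow the proof of Theorem~\ref{gridlike includes grid} almost verbatim, building the grid diagonal by diagonal out of the south-west corner, the genuinely new issue being to make the construction terminate in an honest finite rectangle. First I would fix a point $r$ witnessing $\phi_{\mathrm{SWroot}}$, so that $r$ has both a $V$- and an $H$-successor but no $V$- or $H$-predecessor, and I would invoke Remark~\ref{fingridlike description} for the first-order consequences of $\phi_\mathrm{fingrid}$ (in particular for the join property~(vi), whose extraction from the $\ifl$-formula $\phi_{\mathrm{finjoin}}$ is exactly as in the proof of Remark~\ref{gridlike description}). Since $V$ and $H$ are injective partial functions and $\mA$ is finite, the forward $V$-chain from $r$ is a simple path $r=p_0,p_1,\dots,p_n$ whose last point has no $V$-successor (no cycle can occur, as $r$ has no $V$-predecessor), and likewise the forward $H$-chain is $r=q_0,\dots,q_m$ with $q_m$ having no $H$-successor. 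Using $\phi_{\mathrm{SWedge}}$ one checks by induction along the chains that every $p_i$ has no $H$-predecessor (the ``west border'') and every $q_j$ has no $V$-predecessor (the ``south border'').

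Next I would build an isomorphism $f$ from the finite grid on $\{0,\dots,m\}\times\{0,\dots,n\}$ onto a substructure of $\mA$ with $f(i,0)=q_i$ and $f(0,j)=p_j$, by induction on diagonals $d=i+j$, exactly as in Theorem~\ref{gridlike includes grid}. When passing from $\{(i,j)\mid i+j\le d\}$ to the cells with $i+j=d+1$: the cells with $i=0$ or $j=0$ are already pinned down as $p_{d+1},q_{d+1}$; for a cell $(i,j)$ with $i,j\ge 1$ the two already-mapped grid-predecessors $(i-1,j)$ and $(i,j-1)$ map under $f$ to the $V$- and $H$-successors of $f(i-1,j-1)$, which are distinct by $\phi_{\mathrm{distinct}}$, so $f(i-1,j-1)$ has both a $V$- and an $H$-successor. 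Remark~\ref{fingridlike description}(vi) then --- distinctness excluding the $(V\circ V)\cap(H\circ H)$ alternative --- supplies a point $c$ that is simultaneously the $V$-successor of $f(i,j-1)$ and the $H$-successor of $f(i-1,j)$, and I set $f(i,j):=c$. As in Theorem~\ref{gridlike includes grid}, $\phi_{\mathrm{injective}}$ together with in-degree completeness of the part of $\mA$ built so far forces $c$ to be fresh and keeps $f$ an isomorphism onto $\mA\upharpoonright\rng[f]$.

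The step where real work is needed --- the expected main obstacle --- is to see that this process stops precisely at the rectangle $\{0,\dots,m\}\times\{0,\dots,n\}$, i.e.\ that the ``north'' and ``east'' borders are straight and meet in a single corner. For this I would use $\phi_{\mathrm{NEedge}}$: since $p_n=f(0,n)$ has no $V$-successor, $\phi_{\mathrm{NEedge}}$ propagates ``no $V$-successor'' along $H$ through the whole top row $f(\cdot,n)$, and symmetrically ``no $H$-successor'' through the right column $f(m,\cdot)$, so $NE:=f(m,n)$ has neither a $V$- nor an $H$-successor and no cell of the grid remains to be filled. This, together with $\phi_{\mathrm{SWedge}}$ on the other two sides, also shows that every point of $\rng[f]$ has all of its $V,H$-neighbours --- predecessors and successors alike --- inside $\rng[f]$; in other words $\rng[f]$ is closed under $R:=V\cup H\cup V^{-1}\cup H^{-1}$, hence is a union of weakly connected components of $(A,R)$, and being itself weakly connected it is a single such component. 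Thus $\mA\upharpoonright\rng[f]$ is a maximal weakly connected substructure of $\mA$, i.e.\ a component.

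Finally, properness is immediate: $r$ has distinct $V$- and $H$-successors $p_1\ne q_1$ (by $\phi_{\mathrm{distinct}}$), and Remark~\ref{fingridlike description}(vi) provides their common successor $f(1,1)$, so $\{r,p_1,q_1,f(1,1)\}$ already forms a $2\times 2$ block; hence the rectangle has at least two rows and two columns, so the grid is not of the type $1\times k$ or $k\times 1$ and is therefore grid-like, i.e.\ a proper finite grid. Collecting the pieces, $\mA$ contains an isomorphic copy of a proper finite grid as a component.
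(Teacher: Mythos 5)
Your overall strategy coincides with the paper's: anchor the construction at the $\phi_{\mathrm{SWroot}}$ witness, read off the side lengths $m,n$ from the finite injective $V$- and $H$-chains, grow an in-degree-complete isomorphic copy of the $m\times n$ grid while keeping track of freshness via in-degree completeness, and finally use $\phi_{\mathrm{NEedge}}$ to close off the north and east borders and argue that the image is closed under $V\cup H\cup V^{-1}\cup H^{-1}$, hence a component. (The paper grows the grid row by row rather than along diagonals, but that is a cosmetic difference; its own proof of Theorem~\ref{gridlike includes grid} is diagonal.) Most of your steps therefore line up with the paper's proof.

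There is, however, one step whose justification is wrong as written: you claim that $\phi_{\mathrm{distinct}}$ excludes the $(V\circ V)\cap(H\circ H)$ alternative of the join property at $f(i-1,j-1)$. It does not. $\phi_{\mathrm{distinct}}$ only forbids a single pair from being simultaneously a $V$-edge and an $H$-edge, i.e.\ it makes the \emph{immediate} $V$- and $H$-successors of a point distinct; it is entirely consistent with $\phi_{\mathrm{distinct}}$ that the $V^{2}$- and $H^{2}$-successors of a point coincide (take $x \to_V u \to_V y$ and $x \to_H w \to_H y$ with $u\neq w$ and no edge doubly labelled). The exclusion actually requires the in-degree-completeness machinery: one of the two second-order successors of $f(i-1,j-1)$ is already pinned inside the constructed part, whose predecessor structure is completely known, so by injectivity of $V$ and $H$ the other second-order successor cannot equal it; and at the last cell of each row/diagonal one needs in addition that the relevant second-order successor does not exist at all, which is where $\phi_{\mathrm{NEedge}}$ and the minimality of $m,n$ enter (this is precisely how the paper handles the step ``$(a_{m-1},a_m)\in H^{\mA}$''). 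You already invoke in-degree completeness for the freshness of $c$, so the repair is local, but as stated the key join step --- the heart of the construction --- is not justified.
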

\begin{proof}
%
Due to $\phi_\mathrm{SWroot}$ there exists a point denoted by $SW\in A$ that has a $V$-successor and an $H$-successor, but has no $V\cup H$-predecessors. Now since $V$ is an injective partial function and $A$ is finite, there exists $n\in\N$ such that  for all $x\in A$ $(SW,x)\notin V^{n+1}$. For similar reasons there exists $m\in\N$ such that for all $x\in A$, $(SW,x)\notin H^{m+1}$.
Let $m$ and $n$ be the smallest such numbers. We will show that $\mA$ has an isomorphic copy of the $m\times n$ grid as a component.

We will first show by induction on $k\le n$ that $\mA$ has an isomorphic copy of the $m\times k$ grid as an in-degree complete substructure with $SW$ as a corner point. By the selection of $m$ the point $SW$ has a $H^i$ successor $v_i$ for each $i\leq m$. Since $H$ is an injective partial function and $SW$ has no $H$-predecessors, the points $v_i$ are all distinct and unique. Due to $\phi_\mathrm{SWedge}$ none of the points $v_i$ has a $V$-predecessor and therefore the $V$-successors of the points $v_i$ are not in the set $\{v_i\mid i\leq m\}$. Therefore $\mA\upharpoonright\{v_i\mid i\leq m\}$ is an isomorphic copy of the $m\times 1$ grid. Due to $\phi_\mathrm{SWedge}$, $\phi_\mathrm{SWroot}$ and $\phi_\mathrm{injective}$ the structure $\mA\upharpoonright\{v_i\mid i\leq m\}$ is in-degree complete.

Let us then assume that $\mB$, an in-degree complete substructure of $\mA$, is an isomorphic copy of the $m\times k$ grid $\mG_{(m,k)}$ with $SW$ as a corner point and $k<n$. Let $h$ be the corresponding isomorphism from $\mG_{(m,k)}$ to $\mB$. We will now extend $h$ to $h'$ such that $h'$ is an isomorphism from the $m\times (k+1)$ grid to an in-degree complete substructure of $\mA$. Since $k+1\leq n$ there exists a point $a_0\in A$ such that $a_0$ is the $V$-successor of $h((0,k))$.
Due to $\phi_\mathrm{NEedge}$ and since $h((0,k))$ has a $V$-successor, each of the points $h((i,k))$, $i\leq m$, has a $V$-successor $a_i$. Since $V$ is a partial injective function and the points $h((i,k))$ are all distinct, the points $a_i$ are also all distinct. The structure $\mB$ is in-degree complete, and hence neither any of the points $a_i$ nor any of their $V\cup H$-successors is in $B$.

We will next show that $(a_i,a_{i+1})\in H^\mA$ for all $i<m$. For $i\leq m-2$, the point $h((i,k))$ has an $H^2$-successor but has no $V^2$-successor in the structure $\mB$. Therefore for all $i\leq m-2$, if the $V^2$-successor of $h((i,k))$ exists in $\mA$, it cannot be the same as the $H^2$-successor of $h((i,k))$. Notice that each of the points $h((i,k))$, $i\leq m-2$, has a $V$- and $H$-successor in $\mA$. Therefore due to $\phi_\mathrm{finjoin}$ the $V\circ H$-successor and the $H\circ V$-successor of the point $h((i,k))$, $i\leq m-2$, are the same. Therefore the $H$-successor of $a_i$ is $a_{i+1}$ for all $i\leq m-2$.

It needs still to be shown that $(a_{m-1},a_m)\in H^\mA$.
The point $h((m-1,k))$ has no $H^2$ successor in $\mA$ since $h((m,k))$ is an east border point (due to $\phi_\mathrm{NEedge}$ and the selection of $m$). Therefore there cannot be a point $a$ in $\mA$ such that it is both an $H^2$-successor and a $V^2$-successor of $h((m-1,k))$. Now due to $\phi_\mathrm{finjoin}$ and the fact that $h((m-1,k))$ has a $V$- and an $H$-successor in $\mA$, the $H\circ V$-successor and $V\circ H$-successor of $h((m-1,k))$ have to be the same point. Therefore $(a_{m-1},a_m)\in H^\mA$.

We define $h':=h\cup \{((i,k+1),a_i)\mid i\leq m\}$. Each point $a_i$ with the exception of the west border point $a_0$ has an $H$-predecessor $a_{i-1}$. Hence, due to the in-degree completeness of $\mB$, injectivity of $V$ and $H$, and since each of the points $a_i$ has a $V$-predecessor in the set $B$, we conclude that the structure $\mA\upharpoonright\rng[h']$ is an in-degree complete substructure of $\mA$. We also notice that due to injectivity, the points $a_i$ have no reflexive loops. Due to in-degree completeness of $B$, none of the $V\cup H$-successors of the points $a_i$, $i\leq m$, are in the set $B$. Hence it is sraightforward to observe that $h'$ is the desired isomorphism from the $m\times (k+1)$ grid to an in-degree complete substructure of $\mA$.

We have now proven that $\mA$ has an isomorphic copy of the $m\times n$ grid as a substructure with $SW$ as a corner point. Let $h$ be the isomorphism from the $m\times n$ grid to a substructure of $\mA$ with $SW$ as a corner point. By the selection of $m$ and $n$, the point $h((0,n))$ has no $V$-successors and $h((m,0))$ has no $H$-successors. Therefore, due to $\phi_\mathrm{NEedge}$, none of the points $h((i,n))$, $i\leq m$, have a $V$-successor and none of the points $h((m,j))$, $j\leq n$, have a $H$-successor. This together with functionality and injectivity of $H$ and $V$, and the fact that west border points have no $H$-predecessors and south border points have no $V$-predecessors, imply that $\mA$ has an isomorphic copy of the $m\times n$ grid as a component. Since the point $SW$ has a $V$-successor and an $H$-successor, the $m\times n$ grid is a proper grid.
\end{proof}

We now define some auxiliary \fotwo-formulas.

\[\begin{array}{lcl} 
\phi_{\mathrm{NStape}}           & :=    & \exists x(\phi_{\mathrm{SW}}^{(V,H)}(x)\wedge\phi_{\mathrm{NW}}^{(V',H)}(x))\wedge\exists x(\phi_{\mathrm{SE}}^{(V,H)}(x)\wedge\phi_{\mathrm{NE}}^{(V',H)}(x))\\
																&				&	\wedge\exists x(\phi_{\mathrm{NW}}^{(V,H)}(x)\wedge\phi_{\mathrm{SW}}^{(V',H)}(x))\wedge\exists x(\phi_{\mathrm{NE}}^{(V,H)}(x)\wedge\phi_{\mathrm{SE}}^{(V',H)}(x))\\
																&				&	\wedge\exists x\exists y (\phi_{\mathrm{NW}}^{(V,H)}(x)\wedge\phi_{\mathrm{SW}}^{(V,H)}(y)\wedge V'(x,y)),\\\\
																
\phi_{\mathrm{EWtape}}           & :=    & \exists x(\phi_{\mathrm{SW}}^{(V,H)}(x)\wedge\phi_{\mathrm{SE}}^{(V,H')}(x))\wedge\exists x(\phi_{\mathrm{SE}}^{(V,H)}(x)\wedge\phi_{\mathrm{SW}}^{(V,H')}(x))\\
																&				&	\wedge\exists x(\phi_{\mathrm{NW}}^{(V,H)}(x)\wedge\phi_{\mathrm{NE}}^{(V,H')}(x))\wedge\exists x(\phi_{\mathrm{NE}}^{(V,H)}(x)\wedge\phi_{\mathrm{NW}}^{(V,H')}(x))\\
																&				&	\wedge\exists x\exists y (\phi_{\mathrm{SE}}^{(V,H)}(x)\wedge\phi_{\mathrm{SW}}^{(V,H)}(y)\wedge H'(x,y)),\\\\															

\phi_\mathrm{uniquecorners}						& := & \bigwedge\limits_{P\in C}\forall x\forall y ((P(x)\wedge P(y))\to x=y),
\end{array}\]

where $C=\{\phi_{\mathrm{T}}^{(R,S)}(x)\mid T\in\{\mathrm{SW},\mathrm{NW},\mathrm{NE},\mathrm{SE}\}, (R,S)\in \{ (V,H), (V',H), (V,H')\} \}$

and
\[\begin{array}{lcl} 
\phi_{\mathrm{SW}}^{(R,S)}(x)   & :=    & \forall y \big(\neg R(y,x)\wedge \neg S(y,x)\big)\wedge \exists y R(x,y)\wedge\exists y S(x,y),\\
\phi_{\mathrm{NW}}^{(R,S)}(x)               & :=    & \forall y \big(\neg R(x,y)\wedge \neg S(y,x)\big)\wedge \exists y R(y,x)\wedge\exists y S(x,y),\\
\phi_{\mathrm{NE}}^{(R,S)}(x)               & :=    & \forall y \big(\neg R(x,y)\wedge \neg S(x,y)\big)\wedge \exists y R(y,x)\wedge\exists y S(y,x),\\
\phi_{\mathrm{SE}}^{(R,S)}(x)               & :=    & \forall y \big(\neg R(y,x)\wedge \neg S(x,y)\big)\wedge \exists y R(x,y)\wedge\exists y S(y,x),
\end{array}\]
$(R,S)\in \{ (V,H), (V',H), (V,H')\}$.

Let $\mA=(A,V,H,V',H')$ be a finite structure such that the underlying structures $(A,V,H)$, $(A,V',H)$ and $(A,V,H')$ are fingrid-like. In this context the intuitive meaning of the above three formulas is the following.
\begin{itemize}
\item The formula $\phi_\mathrm{uniquecorners}$ expresses that the structures $(A,V,H)$, $(A,V',H)$ and $(A,V,H')$ each have four unique corner points, exactly one of each type, i.e., southwest corner, northwest corner, northeast corner and southeast corner. In each structure the corner points definine a boundary of a proper finite grid.
\item The formula $\phi_{\mathrm{NStape}}$ expresses that the proper finite grid in $(A,V',H)$ is of the type $m\times 2$ and connects the north border of the grid in $(A,V,H)$ to the south border of the grid in $(A,V,H)$. (The grids in $(A,V,H)$ and $(A,V',H)$ form a tube.)
\item  The formula $\phi_{\mathrm{EWtape}}$ expresses that the proper finite grid in $(A,V,H')$ is of the type $2\times n$ and connects the east border of the grid in $(A,V,H)$ to the west border of the grid in $(A,V,H)$. (The grids in $(A,V,H)$ and $(A,V,H')$ form a tube. The three grids together form a torus.)
\end{itemize}
\begin{lemma}\label{lemma:torus}
Let $\mA=(A,V,H,V',H')$ be a finite structure such that the underlying structures $(A,V,H)$, $(A,V',H)$ and $(A,V,H')$ are fingrid-like and the structure $\mA$ satisfies the conjunction of the formulas $\phi_{\mathrm{NStape}}$, $\phi_{\mathrm{EWtape}}$ and  $\phi_\mathrm{uniquecorners}$.
Then \mA is torus-like.
\end{lemma}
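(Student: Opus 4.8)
The plan is to feed the three fingrid-like reducts into Lemma~\ref{fingridlike includes grid} and then glue the resulting grids together by means of the ``tape'' formulas. First I would apply Lemma~\ref{fingridlike includes grid} to each of $(A,V,H)$, $(A,V',H)$ and $(A,V,H')$, obtaining proper finite grids $\mathfrak{G}_1$, $\mathfrak{G}_2$, $\mathfrak{G}_3$ occurring as components of the respective structures. Because a component of $\mA$ contains every $V,H,V',H'$-edge incident to any of its points, each corner point of each $\mathfrak{G}_i$ witnesses \emph{globally in} $\mA$ the corresponding corner formula $\phi_{\mathrm{SW}}^{(R,S)}$, $\phi_{\mathrm{NW}}^{(R,S)}$, $\phi_{\mathrm{NE}}^{(R,S)}$ or $\phi_{\mathrm{SE}}^{(R,S)}$ for the relation pair $(R,S)$ underlying $\mathfrak{G}_i$ (namely $(V,H)$, $(V',H)$ and $(V,H')$, respectively); here one uses that the grids are \emph{proper}, so that the corners really have the two successors/predecessors those formulas demand. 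Since $\mA\models\phi_{\mathrm{uniquecorners}}$ asserts that each of the twelve corner formulas has at most one realizer in $\mA$, it follows that $(A,V,H)$, $(A,V',H)$ and $(A,V,H')$ each contain a \emph{unique} grid component, with the four corners of each being exactly the unique realizers of the four relevant corner formulas. I would then define $SW$, $NW$, $NE$, $SE$ to be the four corners of $\mathfrak{G}_1$; they are pairwise distinct since $\mathfrak{G}_1$ is proper, and reading off $\phi_{\mathrm{SW}}^{(V,H)}(SW)$, $\phi_{\mathrm{NW}}^{(V,H)}(NW)$, $\phi_{\mathrm{NE}}^{(V,H)}(NE)$, $\phi_{\mathrm{SE}}^{(V,H)}(SE)$ yields condition~(i) of Definition~\ref{toruslike description}, with uniqueness of the tuple inherited from uniqueness of the grid component.

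Next I would use $\phi_{\mathrm{NStape}}$ and $\phi_{\mathrm{EWtape}}$ to transport the corner labels onto $\mathfrak{G}_2$ and $\mathfrak{G}_3$. Each conjunct $\exists x\bigl(\phi_{c}^{(V,H)}(x)\wedge\phi_{c'}^{(V',H)}(x)\bigr)$ of $\phi_{\mathrm{NStape}}$ (with $c,c'\in\{\mathrm{SW},\mathrm{NW},\mathrm{NE},\mathrm{SE}\}$), together with the uniqueness of realizers, identifies the $c$-corner of $\mathfrak{G}_1$ with the $c'$-corner of $\mathfrak{G}_2$; running through the four such conjuncts shows that the SW-, NW-, NE- and SE-corners of $\mathfrak{G}_2$ are $NW$, $SW$, $SE$, $NE$, which is exactly the labelling required by clause~(ii)(b). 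In the same way $\phi_{\mathrm{EWtape}}$ shows that the SW-, NW-, NE- and SE-corners of $\mathfrak{G}_3$ are $SE$, $NE$, $NW$, $SW$, matching clause~(ii)(c). The last conjunct of $\phi_{\mathrm{NStape}}$ gives $(NW,SW)\in V'$ and the last conjunct of $\phi_{\mathrm{EWtape}}$ gives $(SE,SW)\in H'$.

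It then remains to fix $m$ and $n$ consistently across the three clauses. Let $\mathfrak{G}_1$ have type $m\times n$, with the first factor the $H$-direction and the second the $V$-direction; this settles clause~(ii)(a). For clause~(ii)(b): the bottom $H$-row of $\mathfrak{G}_2$ is an $H$-path from its SW-corner $NW$ to its SE-corner $NE$, and it is maximal because $NE$ has no $H$-successor (by $\phi_{\mathrm{NE}}^{(V,H)}(NE)$); the top $H$-row of $\mathfrak{G}_1$ is likewise a maximal $H$-path from $NW$ to $NE$. As $H$ is a global partial injective function on $A$ (Remark~\ref{fingridlike description}, applied to the fingrid-like structure $(A,V,H)$), there is exactly one maximal $H$-path issuing from $NW$, so these two rows coincide and $\mathfrak{G}_2$ has $H$-extent $m$. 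The $V'$-column of $\mathfrak{G}_2$ through $NW$ consists of $NW$ and its unique $V'$-successor $SW$ (the latter by the last conjunct of $\phi_{\mathrm{NStape}}$) and then stops, since $SW$, being the NW-corner of $\mathfrak{G}_2$, has no $V'$-successor; hence $\mathfrak{G}_2$ has $V'$-extent $2$ and is the $m\times 2$ grid. Finally, the $V'$-successor of $NE$ exists and lies in the component $\mathfrak{G}_2$ (because $NE$ is its SE-corner), and it must be the NE-corner of the $m\times 2$ grid $\mathfrak{G}_2$, namely $SE$, so $(NE,SE)\in V'$. Clause~(ii)(c) follows by the mirror argument, with the common relation now $V$ instead of $H$: functionality and injectivity of $V$ force the two $V$-columns of $\mathfrak{G}_3$ to coincide with the two $V$-borders of $\mathfrak{G}_1$, so $\mathfrak{G}_3$ has $V$-extent $n$; $(SE,SW)\in H'$ forces the $H'$-extent of $\mathfrak{G}_3$ to be $2$; and the $H'$-successor of $NE$ (the NW-corner of $\mathfrak{G}_3$) is its NE-corner $NW$, so $(NE,NW)\in H'$. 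Together with condition~(i), this shows $\mA$ is torus-like.

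I expect the main obstacle to be exactly this last step: one must keep the orientation bookkeeping straight (for each of $\mathfrak{G}_1,\mathfrak{G}_2,\mathfrak{G}_3$, which of $V,H,V',H'$ is the horizontal and which the vertical direction, and which grid corner each of $SW,NW,NE,SE$ is), and the crucial ``lining-up'' of the borders that two grids share---which is what forces the parameter $m$ (and likewise $n$) to be the same in different clauses---has to be argued carefully from the functionality and injectivity of $H$ (resp.~$V$) together with the componenthood of the grids.
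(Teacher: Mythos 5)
Your proof is correct and takes essentially the approach the paper intends: the paper in fact states Lemma~\ref{lemma:torus} without proof, giving only the three bullet-point glosses of $\phi_{\mathrm{uniquecorners}}$, $\phi_{\mathrm{NStape}}$ and $\phi_{\mathrm{EWtape}}$ that precede it, and your argument is the rigorous elaboration of exactly that gloss. In particular you correctly isolate and handle the one substantive point, namely that functionality and injectivity of the shared relation ($H$ for clause (ii)(b), $V$ for clause (ii)(c) of Definition~\ref{toruslike description}) force the shared border rows of the grids obtained from Lemma~\ref{fingridlike includes grid} to coincide, so that the parameters $m$ and $n$ agree across the three clauses.
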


Notice that for a torus $\mD$ to be torus-like, it is required that the finite grid $(D,V,H)$ is not of the type $1\times n$ or $n\times 1$ for any $n\in\N$.
A torus that is torus-like is called a \emph{proper} torus.

It immediately follows from the previous lemma that there is a sentence $\phi_{\mathrm{torus}}\in \iftwo$ such that for all finite structures $\mA=(A,V,H,V',H')$, if $\mA \models \phi_{\mathrm{torus}}$ then $\mA$ is torus-like, and furthermore, every proper torus satisfies $\phi_\mathrm{torus}$.

We say that a structure $\mA=(A,\{R_i^{\mA}\}_{i\le n})$ is a \emph{topping} of a structure $\mB=(B,\{R_i^{\mB}\}_{i\le n})$ iff $A=B$ and $R_i^{\mB}\subseteq R_i^{\mA}$ for all $i\le n$.

\begin{lemma}\label{toruslike includes torus}
Let $\mA=(A,V,H,V',H')$ be a finite structure with $\mA\models\phi_{\mathrm{torus}}$. Then there is a torus $\mD$ such that $\mA$ contains an isomorphic copy of a topping of $\mD$ as a substructure.
\end{lemma}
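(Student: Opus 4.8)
The plan is to read off from the torus-likeness of $\mA$ a concrete torus $\mD$ together with an explicit isomorphism onto a substructure of $\mA$, and then to absorb the ``extra'' wrap-around edges of $\mA$ into a topping of $\mD$.

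First I would note that, by the observation following Lemma~\ref{lemma:torus}, $\mA\models\phi_{\mathrm{torus}}$ implies that $\mA$ is torus-like in the sense of Definition~\ref{toruslike description}. This yields numbers $m,n\in\N$, the four distinct corner points $SW,NW,NE,SE$, a set $P\subseteq A$ on which $(A,V,H)$ has a component isomorphic to the $m\times n$ grid with those corners, a set $Q\subseteq A$ on which $(A,V',H)$ has a component isomorphic to the $m\times 2$ grid with corners $NW,SW,SE,NE$ and $(NW,SW),(NE,SE)\in V'$, and analogously a set for the $2\times n$ grid in $(A,V,H')$ with $(SE,SW),(NE,NW)\in H'$. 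Fix an isomorphism $g$ from the $m\times n$ grid onto $(A,V,H)\!\upharpoonright\!P$.

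Next I would take $\mD$ to be the torus of Definition~\ref{def:grid} with parameters chosen so that its $(V,H)$-reduct is the $m\times n$ grid; then $g$ is already an isomorphism of the $(V,H)$-reducts of $\mD$ and $\mA\!\upharpoonright\!P$, since $P$ carries a \emph{component} of $(A,V,H)$ and hence contains no stray $V$- or $H$-edges. I would then define a structure $\mD'$ on the domain of $\mD$ by keeping $V^{\mD'}:=V^{\mD}$, $H^{\mD'}:=H^{\mD}$ and setting $V'^{\mD'}:=g^{-1}\big(V'^{\mA}\cap(P\times P)\big)$ and $H'^{\mD'}:=g^{-1}\big(H'^{\mA}\cap(P\times P)\big)$. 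By construction $g$ is then an isomorphism from $\mD'$ onto the substructure $\mA\!\upharpoonright\!P$ with respect to all four relations, so the lemma reduces to verifying that $\mD'$ is a topping of $\mD$, i.e.\ that the wrap-around edges of $\mD$ are contained in $V'^{\mD'}$ and in $H'^{\mD'}$.

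This last inclusion is the step I expect to be the main obstacle. For the $V'$-wrap I would first argue that the two $H$-rows of the $m\times 2$ grid on $Q$ coincide, as point sets with their $H$-edges, with the north and south border rows of the $m\times n$ grid on $P$: each such row is the $H$-path issuing from $NW$, resp.\ from $SW$, and since $P$ and $Q$ carry components of $(A,V,H)$, resp.\ of $(A,V',H)$, the relation $H$ is functional along these paths and they stay inside $P\cap Q$, so the two descriptions of each row agree and terminate at $NE$, resp.\ $SE$. Then, inside the $m\times 2$ grid, $V'$ is its vertical successor relation, so the grid commutation identity together with $(NW,SW)\in V'$ and functionality of $H$ and $V'$ gives, by induction along the $H$-direction, that the $V'$-successor of the $i$-th north-border point is the $i$-th south-border point; transporting this back through $g$ shows that every wrap edge $((i,m),(i,0))$ of $\mD$ lies in $V'^{\mD'}$. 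The argument for $H'^{\mD'}$ is symmetric, using the $2\times n$ grid in $(A,V,H')$ and $(SE,SW),(NE,NW)\in H'$. Hence $\mD'$ is a topping of $\mD$ and $\mA\!\upharpoonright\!P\cong\mD'$, which proves the claim. The routine parts are the two row-matching arguments and the two column-by-column inductions, all of which are pure grid combinatorics once the functionality and injectivity of the successor relations inside the relevant components has been recorded.
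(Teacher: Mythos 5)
Your proof is correct and is essentially the elaboration the paper has in mind: the paper's own proof consists of the single sentence that the claim is immediate from Definition~\ref{toruslike description} and Definition~\ref{def:grid}, and your argument --- extracting the $m\times n$ grid component of $(A,V,H)$, pulling the $V'$- and $H'$-edges back through the isomorphism, and using the $m\times 2$ and $2\times n$ grid components together with $(NW,SW),(NE,SE)\in V'$ and $(SE,SW),(NE,NW)\in H'$ to verify that all wrap-around edges are present --- supplies exactly the details that make it so. No gaps.
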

\begin{proof}
Immediate from Definition~\ref{toruslike description} and the definition of a torus, i.e., Definition~\ref{def:grid}.
\end{proof}

The following theorem is the finite analogue of Theorem~\ref{iftwo sat complexity}.

\begin{theorem}\label{iftwo finsat complexity}
\finsat[\iftwo] is \sigmazeroone-complete.
\end{theorem}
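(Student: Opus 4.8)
The statement asserts that $\finsat[\iftwo]$ is $\sigmazeroone$-complete, the finite analogue of Theorem~\ref{iftwo sat complexity}. The plan is to mirror the structure of that proof: establish the upper bound $\finsat[\iftwo]\in\sigmazeroone$ by a chain of computable translations, and establish $\sigmazeroone$-hardness by reducing $\tiling[\torus]$ (which is $\sigmazeroone$-complete by Theorem~\ref{periodic tiling pizeroone}) to $\finsat[\iftwo]$.

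For the upper bound, observe that $\finsat[\fo]\in\sigmazeroone$, since one can enumerate finite structures and model-check. By Remark~\ref{eso sat} this gives $\finsat[\eso]\in\sigmazeroone$ (the second-order quantifiers over a fixed finite domain are just a finite search), and then the computable translation from $\df$ to $\eso$ of \cite[Theorem~6.2]{va07} yields $\finsat[\df^3]\in\sigmazeroone$; finally the computability of the reductions in Lemma~\ref{iftod} and Theorem~\ref{dtwo le if2} gives $\finsat[\iftwo]\in\sigmazeroone$. (Alternatively one can simply note $\finsat[\calL]\in\sigmazeroone$ for any logic $\calL$ with a computable finite model-checking relation, which covers $\iftwo$ directly.)

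For the lower bound, the reduction is $h(T):=\phi_{\mathrm{torus}}\wedge\gamma_T$, where $\phi_{\mathrm{torus}}$ is the $\iftwo$ sentence obtained from Lemma~\ref{lemma:torus} and $\gamma_T$ is the $\fotwo$ tiling sentence of Lemma~\ref{torustiling iff tiling formula}. I would argue as follows. If $T\in\tiling[\torus]$, there is a torus $\mD$ that is $T$-tilable; since $\mD$ is a proper torus it satisfies $\phi_{\mathrm{torus}}$ (as noted after Lemma~\ref{lemma:torus}), and by Lemma~\ref{torustiling iff tiling formula} there is an expansion $\mD^*$ with $\mD^*\models\gamma_T$; hence $\mD^*\models h(T)$, so $h(T)\in\finsat[\iftwo]$. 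Conversely, if $\mA^*$ is a finite model of $h(T)$, let $\mA$ be its $\{V,H,V',H'\}$-reduct; then $\mA\models\phi_{\mathrm{torus}}$, so by Lemma~\ref{toruslike includes torus} there is a torus $\mD$ such that $\mA$ contains an isomorphic copy of a topping of $\mD$ as a substructure. Since $\mA^*\models\gamma_T$, the tiling sentence is witnessed on $\mA$, and one checks that restricting the witnessing unary predicates to the embedded copy of (the topping of) $\mD$ still satisfies $\gamma_T$ there — using that $\gamma_T$ only constrains tiles along $V,H,V',H'$ edges, all of which are present in the topping; hence $\mD$ is $T$-tilable, i.e.\ $T\in\tiling[\torus]$. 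Since $h$ is clearly computable, this is a many-one reduction, and $\sigmazeroone$-hardness follows.

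The main obstacle is the backward direction of the reduction: extracting an honest $T$-tiling of a genuine torus $\mD$ from a model of $h(T)$, which is an arbitrary finite structure only forced to \emph{contain} a topping of a torus. One must be careful that the embedded copy is of a \emph{proper} torus (type $\ne 1\times n, n\times 1$), that the tiling predicates restricted to it still obey the adjacency constraints along $V\cup V'$ and $H\cup H'$, and that a topping of $\mD$ rather than $\mD$ itself causes no trouble — the extra edges in a topping only add constraints, so a tiling of the topping restricts to a tiling of $\mD$ (cf.\ the role of Lemma~\ref{tiling supergrids} in the infinite case). These points are exactly where Lemmas~\ref{fingridlike includes grid}, \ref{lemma:torus} and \ref{toruslike includes torus} do the heavy lifting, so once they are in place the remaining argument is routine.
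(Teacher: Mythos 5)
Your proof follows essentially the same route as the paper: the same reduction $T\mapsto\phi_{\mathrm{torus}}\wedge\gamma_T$ from \tiling[\torus], invoking Lemmas~\ref{torustiling iff tiling formula} and~\ref{toruslike includes torus} together with the analogue of Lemma~\ref{tiling supergrids} for the backward direction, and (via your parenthetical alternative) the enumerate-finite-structures-and-model-check observation for the upper bound, which is exactly what the paper uses. The one detail to make explicit in the forward direction is that an arbitrary $T$-tilable torus need not be \emph{proper} (it could be of type $1\times n$ or $n\times 1$ and hence fail $\phi_{\mathrm{torus}}$), so one should first pass to a proper $T$-tilable torus, e.g.\ by doubling the dimensions, before asserting that it satisfies $\phi_{\mathrm{torus}}$.
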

\begin{proof}
For the upper bound, note that since all finite structures can be recursively enumerated and the model checking problem of $\iftwo$ over finite models is clearly decidable, we have $\finsat[\iftwo]\in\sigmazeroone$.

The lower bound follows by a reduction $g$ from \tiling[\torus] to our problem defined by $g(T):=\phi_{\mathrm{torus}} \wedge \gamma_T$.
To see that $g$ indeed is such a reduction, first let $T$ be a set of tiles such that there is a torus $\mD'$ which is $T$-tilable. 
Therefore there clearly exists a proper torus $\mD$ that is $T$-tilable. Then, by Lemma~\ref{torustiling iff tiling formula}, it follows that there is an expansion $\mD^*$ of $\mD$ such that $\mD^*\models \gamma_T$. We have $\mD^*\models \phi_{\mathrm{torus}}$ and therefore $\mD^* \models \phi_{\mathrm{torus}}\wedge \gamma_T$.
If, on the other hand, $\mA^*$ is a finite structure such that $\mA^*\models \phi_{\mathrm{torus}}\wedge \gamma_T$, then by Lemma~\ref{toruslike includes torus}, $\mA^*$ has a substructure $\mB^*_+$, which is an expansion of an isomorphic copy of a topping of a torus $\mB$. 
Furthermore, by Lemma~\ref{torustiling iff tiling formula}, the $\{V,V',H,H'\}$-reduct $\mA$ of the structure $\mA^*$ is $T$-tilable. 
Hence, by the obvious analogue of Lemma~\ref{tiling supergrids}, the $\{V,V',H,H'\}$-reduct $\mB_+$ of $\mB^*_+$ is $T$-tilable. Therefore $\mB$ is clearly $T$-tilable.
\end{proof}

\section{\texorpdfstring{Satisfiability for \dtwo is NEXPTIME-complete}{Satisfiability for D\textasciicircum2 is NEXPTIME-complete}}\label{sec:satd2}
In this section we show that \sat[\dtwo] and \finsat[\dtwo] are \NEXPTIME-complete. Our proof uses the fact that \sat[\foctwo] and \finsat[\foctwo] are \NEXPTIME-complete \cite{Pratt-Hartmann:2005}.

\begin{theorem}\label{DtoESO}
Let $\tau$ be a relational vocabulary. For every formula $\phi\in \dtwo[\tau]$ there is a sentence $\phi^*\in \eso[\tau\cup\{R\}]$ (with $\ar[R]=|\fr[\phi]|$),
\[\phi^* :=\exists R_1\ldots\exists R_k\psi,\]
where $R_i$ is of arity at most $2$ and $\psi\in \foctwo$,
 such that for all $\mA$ and teams $X$ with $\dom(X)=\fr(\phi)$ it holds that 
\begin{equation}\label{dtoesoequiv}
\mA\models_X \phi \text{ iff } (\mA,\rel(X))\models \phi^*,
\end{equation}
where $(\mA,\rel(X))$ is the expansion $\mA'$ of $\mA$ into vocabulary $\tau\cup\{R\}$ defined by $R^{\mA'} := \rel[X]$.
\end{theorem}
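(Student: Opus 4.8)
The plan is to argue by induction on the structure of $\phi$, essentially reproving the translation of $\df$ into $\eso$ from \cite[Theorem~6.2]{va07} but exploiting two economies available in the two-variable setting. First, every subformula $\psi$ of a $\dtwo$-formula satisfies $\fr[\psi]\subseteq\{x,y\}$, so the team $X_\psi$ occurring at $\psi$ during an evaluation is coded faithfully by the at-most-binary relation $\rel[X_\psi]$; by Proposition~\ref{freevar} we may always assume $\dom(X_\psi)=\fr[\psi]$, and a team with a fixed domain is literally the set of assignments read off from its own relation. Second, the first-order content of each semantic clause can be written using only the variables $x,y$ provided counting quantifiers are allowed. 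Concretely, for every subformula $\psi$ I would introduce a fresh relation symbol $S_\psi$ of arity $|\fr[\psi]|\le 2$ (with $S_\phi:=R$) and build, by induction on $\psi$, a prenex $\eso$-sentence $\psi^*=\exists\bar S\,\eta_\psi$ over $\tau\cup\{S_\psi\}\cup\bar S$ with $\eta_\psi\in\foctwo$, such that for all $\mA$ and all teams $X$ with $\dom(X)=\fr[\psi]$ one has $\mA\models_X\psi$ iff $(\mA,\rel[X])\models\psi^*$. Taking $\psi=\phi$ then gives the theorem: $\foctwo$ is closed under conjunction and the second-order quantifiers can all be collected at the front, so the matrix stays in $\foctwo$ and every introduced symbol is at most binary.

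In the inductive step I would conjoin a \emph{linking} formula, relating $S_\psi$ to the relation symbols of the immediate subformulas, with the inductively given translations of those subformulas. The literal cases give $\forall x\forall y\,(S_\psi(\ldots)\to\alpha)$ and are first-order; a negated dependence atom becomes ``$S_\psi$ is empty'' ($\neg\exists x\exists y\,S_\psi(x,y)$, or its unary/nullary variant); conjunction is handled by projecting $S_\psi$ onto $\fr[\psi_i]$ with an existential quantifier and recursing; disjunction by existentially guessing $S_Y,S_Z$ with $S_Y\cup S_Z=S_\psi$, projecting each onto the relevant subformula, and recursing; and a universal quantifier $\forall x\,\psi'$ becomes the cylindrification condition $\forall x\forall y\,(S_{\psi'}(x,y)\leftrightarrow S_\psi(y))$ together with its degenerate variants. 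Counting is needed in exactly three places: the dependence atom $\dep[x,y]$ becomes the functionality statement $\forall x\,\neg\exists^{\ge2}y\,S_\psi(x,y)$; the atom $\dep[x]$ becomes $\neg\exists^{\ge2}x\,S_\psi(x)$; and the existential quantifier $\exists x\,\psi'$ becomes the assertion that $S_{\psi'}$ is the graph of a function defined on the projection of $S_\psi$, e.g.\ $\forall x\forall y\,(S_{\psi'}(x,y)\to S_\psi(y))\wedge\forall y\,(S_\psi(y)\to\exists^{=1}x\,S_{\psi'}(x,y))$ (with $\exists^{=1}$ abbreviated via $\exists^{\ge1},\exists^{\ge2}$), plus the variants where fewer variables are live. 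In every case only $x,y$ and counting over a single variable occur, so each $\eta_\psi\in\foctwo$ and each $S_\psi$ and each auxiliary relation is at most binary.

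For correctness I would argue by induction on $\psi$. For the direction $\mA\models_X\psi\Rightarrow(\mA,\rel[X])\models\psi^*$ one unfolds the team semantics to produce witnessing teams at each subformula (together with the witnessing decompositions for $\vee$ and the witnessing functions for $\exists$), sets each $S_\psi:=\rel[X_\psi]$ and the auxiliary relations accordingly, and observes that the linking formulas hold by construction and the atomic clauses by inspection. For the converse one takes relations witnessing $\psi^*$, lets $X_\psi$ be the team with domain $\fr[\psi]$ and relation $S_\psi$, and proves $\mA\models_{X_\psi}\psi$ from the atoms upward: the linking formulas give exactly the relationships among the $X_\psi$ required by the semantic clauses, and Proposition~\ref{freevar} is used each time to pass between a team over $\fr[\psi]$ and its restriction over $\fr[\psi_i]$.

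I expect the only genuinely non-routine point — and the reason $\foctwo$, rather than $\fo^3$, is the right target — to be the treatment of the dependence atoms and the existential quantifier: phrased naively, functional dependence (``$S(x,y)$ and $S(x,z)$ imply $y=z$'') and the existential step both seem to require a third variable, and the crux is to replace that third variable by the counting quantifiers $\exists^{\ge2}$ and $\exists^{=1}$ over a single variable. This is precisely what will let us feed the output of the translation into Pratt-Hartmann's $\NEXPTIME$ decision procedure for $\foctwo$ in Theorem~\ref{dtwo nexptime}. The remaining work — bookkeeping of free-variable sets, arities, and projections across the cases, and the several degenerate sub-cases in which a quantified variable does not occur free in its scope — is tedious but entirely mechanical.
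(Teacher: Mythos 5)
Your proposal is correct and follows essentially the same route as the paper's proof: an induction that codes the team at each subformula by a fresh relation symbol of arity at most two, with linking formulas for the connectives and quantifiers and a counting quantifier for the dependence atom, followed by pulling the second-order quantifiers to the front. The only cosmetic difference is that you pin the witness teams down exactly (hence believe $\exists^{=1}$ is needed for the existential quantifier), whereas the paper uses one-sided inclusions such as $\forall x\exists y(R(x)\to S(x,y))$ and lets downward closure (Proposition~\ref{Downward closure}) absorb the slack, so counting is needed only for the dependence atoms.
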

\begin{proof}
Using induction on $\phi$ we will first translate $\phi$ into a sentence  $\tau_\phi \in \eso[\tau\cup\{R\}]$ satisfying \eqref{dtoesoequiv}. Then we note that $\tau_{\phi}$ can be translated into an equivalent sentence $\phi^*$ that also satisfies the syntactic requirement of the theorem. The proof is a modification of the proof from \cite[Theorem~6.2]{va07}. Below we write $\phi(x,y)$ to indicate that $\fr[\phi]=\{x,y\}$. Also, the quantified relations  $S$ and $T$ below are assumed not to appear in $\tau_{\psi}$ and $\tau_{\theta}$. 
 \begin{enumerate}
\item Let $\phi(x,y)\in \{x=y,\neg x=y, P(x,y),\neg P(x,y)\}$. Then  $\tau_{\phi}$ is defined as
\[ \forall x\forall y( R(x,y)\rightarrow \phi(x,y)).  \]
\item Let $\phi(x,y)$ be of the form $\dep(x,y)$. Then $\tau_{\phi}$ is  defined as
\[\forall x\exists ^{\le 1}yR(x,y).\]
\item Let $\phi(x,y)$ be of the form $\neg \dep(x,y)$. Then $\tau_{\phi}$ is  defined as
\[\forall x\forall y\neg R(x,y).\]
\item Let $\phi(x,y)$ be of the form $\psi(x,y)\vee \theta(y)$. Then $\tau_{\phi}$ is defined as
\[
\exists S\exists T(\tau_{\psi}(R/S)\wedge\tau_{\theta}(R/T) \wedge \forall x\forall y (R(x,y)\rightarrow S(x,y)\vee T(y))).
\]
\item\label{0-ary} Let $\phi(x)$ be of the form $\psi(x)\vee \theta$. Then $\tau_{\phi}$ is defined as
\[
\exists S\exists T(\tau_{\psi}(R/S)\wedge\tau_{\theta}(R/T) \wedge \forall x (R(x)\rightarrow S(x)\vee T)).
\]
\item Let $\phi(x)$ be of the form $\psi(x)\wedge \theta(y)$. Then $\tau_{\phi}$ is defined as
\[
\exists S\exists T(\tau_{\psi}(R/S)\wedge\tau_{\theta}(R/T)\wedge \forall x\forall y (R(x,y)\rightarrow S(x)\wedge T(y))).
\]
\item Let $\phi(x)$ be of the form $\exists y\psi(x,y)$.  Then $\tau_{\phi}$ is defined as 
\[\exists S(\tau_{\psi}(R/S)\wedge \forall x\exists y(R(x)\rightarrow  S(x,y))). \]      
 \item Let $\phi(x)$ be of the form $\forall y\psi(x,y)$. Then  $\tau_{\phi}$ is defined as
\[\exists S(\tau_{\psi}(R/S)\wedge \forall x \forall y(R(x)\rightarrow  S(x,y))).\]
\end{enumerate}
It is worth noting that in the translation above we have not displayed all the possible cases, e.g.,  $\phi$ of the form $\dep[x]$ or $P(x)$, for which $\tau_{\phi}$ is defined analogously to the above.  
Note also that, for convenience,  we allow $0$-ary relations in the translation. The possible interpretations of a $0$-ary relation $R$ are $\emptyset$ and $\{\emptyset\}$. Furthermore, for a $0$-ary $R$, we define $\mA\models R$ if and only if $R^{\mA}=\{\emptyset\}$. Clause \ref{0-ary} exemplifies the use of $0$-ary relations in the translation. It is easy to see that $\tau_{\phi}$ in \ref{0-ary} is equivalent to 
\[
  \exists S(\tau_{\theta}(R/\top)\vee  (\tau_{\psi}(R/S)\wedge \forall x (R(x)\rightarrow S(x)))).
\]
Furthermore, the use of $0$-ary relations in the above translation can be easily eliminated with no essential change in the translation.   

A straightforward induction on $\phi$ shows that $\tau_{\phi}$ can be transformed into $\phi^*$ of the form
\[\exists R_1\ldots\exists R_k (\forall x\forall y\psi\wedge \bigwedge_{i}\forall x \exists y\theta_i\wedge \bigwedge_{j} \forall x\exists y^{\le1}R_{m_j}(x,y)), \]
where $\psi$ and $\theta_i$ are quantifier-free.
\end{proof}
Note that if $\phi\in \dtwo$ is a sentence, the relation symbol $R$ is 0-ary and $\rel[X]$ (and $R^{\mA}$) is either $\emptyset$  or $\{\emptyset\}$.
Hence, Theorem~\ref{DtoESO} implies that for an arbitrary sentence $\phi \in \dtwo[\tau]$ there is a sentence $\phi^*(R/ \true) \in \eso[\tau]$ such that for all $\mA$ it holds that
\begin{equation}\label{d2eso sentences}
\mA \models \phi \text{ iff }\mA \models_{\{\emptyset\}} \phi \text{ iff }\mA \models \phi^*(R/\true).
\end{equation}

It is worth noting that, if $\phi\in\dtwo$ does not contain any dependence atoms, i.e.,~$\phi\in\fotwo$, the sentence $\phi^*$ is of the form
\[\exists R_1\ldots\exists R_k (\forall x\forall y\psi\wedge \bigwedge _{i}\forall x \exists y\theta_i)\]
and the first-order part of this is in Scott normal form. So, in Theorem \ref{DtoESO} we essentially translate formulas of $\dtwo$ into Scott normal form \cite{sc62}.

Theorem \ref{DtoESO} now implies the following:
\begin{theorem}\label{dtwo nexptime}
\sat[\dtwo] and \finsat[\dtwo] are \NEXPTIME-complete.
\end{theorem}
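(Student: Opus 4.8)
The plan is to derive both completeness results from Theorem~\ref{DtoESO} together with the known \NEXPTIME-completeness of \sat[\foctwo] and \finsat[\foctwo] due to Pratt-Hartmann~\cite{Pratt-Hartmann:2005}. For membership in \NEXPTIME, I would first observe that by equation~\eqref{d2eso sentences} (the sentence case of Theorem~\ref{DtoESO}), every sentence $\phi\in\dtwo$ is effectively translated into a sentence $\phi^*(R/\true)\in\eso[\tau]$ of the special syntactic shape displayed at the end of the proof of Theorem~\ref{DtoESO}, namely a block of second-order existential quantifiers over relations of arity at most $2$ prefixed to a \foctwo-sentence (the $\forall x\exists^{\le 1}y\,R_{m_j}(x,y)$ conjuncts are \foctwo-expressible, and the rest is already in Scott normal form). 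By Remark~\ref{eso sat}, prefixing a first-order sentence by existential second-order quantifiers over symbols of its vocabulary does not change satisfiability; hence $\phi$ is (finitely) satisfiable iff the \foctwo-sentence $\psi$ obtained by dropping the $\exists R_i$ prefix is (finitely) satisfiable. The translation $\phi\mapsto\psi$ is computable and produces a formula of size polynomial (indeed linear up to the blow-up inherent in the inductive clauses) in $|\phi|$, so this is a polynomial-time many-one reduction from \sat[\dtwo] to \sat[\foctwo] and from \finsat[\dtwo] to \finsat[\foctwo]. Since both target problems lie in \NEXPTIME, so do \sat[\dtwo] and \finsat[\dtwo].

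For \NEXPTIME-hardness, I would use the fact that \dtwo is a conservative extension of \fotwo: every \fotwo-sentence is syntactically a \dtwo-sentence (it uses no dependence atoms), and by Proposition~\ref{FO extension} its truth in a model under the classical semantics agrees with its truth under team semantics on $\{\emptyset\}$. Thus the identity map is a trivial polynomial-time reduction from \sat[\fotwo] to \sat[\dtwo] and from \finsat[\fotwo] to \finsat[\dtwo]. Since \sat[\fotwo] and \finsat[\fotwo] are \NEXPTIME-hard by Gr\"adel, Kolaitis and Vardi~\cite{grkova97} (and \fotwo has the finite model property, so the two problems coincide), the claimed hardness follows immediately. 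Combining the two directions, \sat[\dtwo] and \finsat[\dtwo] are \NEXPTIME-complete.

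The one point requiring care — and the only place where I expect real work — is verifying that the reduction to \foctwo is genuinely polynomial-time, i.e., that the inductive translation of Theorem~\ref{DtoESO} does not blow up the formula size super-polynomially. Each inductive clause introduces a bounded number of new second-order symbols and a bounded amount of first-order syntactic overhead per subformula, and the final normal-form transformation (pulling out the $\exists R_i$ and collecting the Scott-style conjuncts) is likewise size-linear, so the total size is $O(|\phi|)$ and in particular polynomial; I would just state this explicitly. Everything else is bookkeeping: noting that the $0$-ary relations used for convenience in Theorem~\ref{DtoESO} are harmless and eliminable, that membership in \foctwo of the conjuncts $\forall x\exists^{\le1}y\,R(x,y)$ is clear, and that the finite-satisfiability versions go through verbatim because \eqref{d2eso sentences} and Remark~\ref{eso sat} are insensitive to whether the model is required to be finite.
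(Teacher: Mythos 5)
Your proposal is correct and follows essentially the same route as the paper: membership via the translation of Theorem~\ref{DtoESO} (in the sentence case \eqref{d2eso sentences}), stripping the existential second-order prefix to obtain a \foctwo{} sentence whose (finite) satisfiability is decided in \NEXPTIME{} by Pratt-Hartmann's result, and hardness via the syntactic inclusion $\fotwo\le\dtwo$ together with the \NEXPTIME-hardness of \sat[\fotwo] and \finsat[\fotwo]. The only difference is that you make explicit the polynomial size bound on the translation, which the paper simply asserts as clear.
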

\begin{proof}
Let $\phi\in \dtwo$ be a sentence. Then, by \eqref{d2eso sentences}, $\phi$ is (finitely) satisfiable if and only if $\phi^*$ is. Now $\phi^*$ is of the form
\[\exists R_1\ldots\exists R_k\psi,\]
where $\psi\in \foctwo$. Clearly, $\phi^*$ is (finitely) satisfiable iff
$\psi$ is (finitely) satisfiable as a $\foctwo[\tau\cup\{R_1,\dots,R_k\}]$ sentence. Now since the mapping $\phi\mapsto \phi^*$ is clearly computable in polynomial time and (finite) satisfiability of $\psi$ can be checked in \NEXPTIME \cite{Pratt-Hartmann:2005}, we get that $\sat[\dtwo], \finsat[\dtwo]\in \NEXPTIME$. On the other hand, since $\fotwo\le \dtwo $ and \sat[\fotwo], \finsat[\fotwo] are \NEXPTIME-hard \cite{grkova97}, it follows that \sat[\dtwo] and \finsat[\dtwo] are as well.
\end{proof}

\section{Conclusion}

We have studied the complexity of the two-variable fragments of dependence logic and independence-friendly logic.
We have shown (Theorem ~\ref{dtwo nexptime}) that both the satisfiablity and finite satisfiability problems for \dtwo are decidable, \NEXPTIME-complete to be exact.
We have also proved (Theorems \ref{iftwo sat complexity} and \ref{iftwo finsat complexity}) that both problems are undecidable for \iftwo; the satisfiability and finite satisfiabity problems for \iftwo are \pizeroone-complete and \sigmazeroone-complete, respectively. While the full logics \df and \ifl are equivalent over sentences, we have shown that the finite variable variants \dtwo and \iftwo are not, the latter being more expressive.
This was obtained as a by-product of the deeper result concerning the decidability barrier between these two logics.

There are many open questions related to these logics. We conclude with two of them:
\begin{enumerate}
\item What is the complexity of the validity problems of \dtwo and \iftwo?
\item Is it possible to define NP-complete problems in \dtwo or in \iftwo?
\end{enumerate}
    
\section*{Acknowledgments}
The authors would like to thank  Phokion G.~Kolaitis for suggesting to study the satisfiability of \dtwo. The authors would also like to thank Johannes Ebbing, Lauri Hella, Allen Mann, Jouko V\"a\"an\"anen and Thomas Zeume for helpful discussions and comments during the preparation of this article.


\providecommand{\bysame}{\leavevmode\hbox to3em{\hrulefill}\thinspace}
\providecommand{\MR}{\relax\ifhmode\unskip\space\fi MR }
\providecommand{\MRhref}[2]{%
  \href{http://www.ams.org/mathscinet-getitem?mr=#1}{#2}
}
\providecommand{\href}[2]{#2}

\addcontentsline{toc}{section}{References}

\end{document}